\documentclass[a4paper,fleqn]{cas-sc}

\usepackage[authoryear,longnamesfirst]{natbib}

\newtheorem{theorem}{Theorem}
\newtheorem{lemma}[theorem]{Lemma}

\newtheorem{corollary}[theorem]{Corollary}
\newdefinition{remark}{Remark}
\newdefinition{definition}{Definition}
\newdefinition{example}{Example}
\newdefinition{problem}{Problem}
\newproof{proof}{Proof}

\usepackage{graphicx}%
\usepackage{multirow}%
\usepackage{amsmath,amssymb,amsfonts}%
\usepackage{mathrsfs}%
\usepackage{dsfont}%
\usepackage{subcaption} 
\usepackage{enumitem}  

\newcommand{\R}{\mathbb R}
\newcommand{\N}{\mathbb N}

\newcommand{\ep}{\varepsilon}

\newcommand{\Fourier}[1]{\mathcal{F}\left(#1\right)}
\newcommand{\one}{\mathds{1}}
\newcommand{\vp}{\mathrm{pv}\frac{1}{x}}

\newcommand{\pp}{\tilde{p}}

\newcommand{\Poly}{\mathscr{P}}
\newcommand{\Sw}{\mathscr{S}}
\newcommand{\SM}{\mathscr{S}_M}

\newcommand{\dotProduct}[2]{\langle #1, #2 \rangle}

\DeclareMathOperator{\ind}{ind}
\DeclareMathOperator{\sinc}{sinc}
\DeclareMathOperator{\sign}{sign}

\DeclareMathOperator{\supp}{supp}

\DeclareMathOperator{\id}{id}
\DeclareMathOperator{\dom}{dom}

\begin{document}
\let\WriteBookmarks\relax
\def\floatpagepagefraction{1}
\def\textpagefraction{.001}

\shorttitle{Exact Deconvolution via Recursive Convolutions}

\shortauthors{A. González-Calvin}  

\title [mode = title]{Exact Deconvolution for Schwartz Kernels: From Polynomial Automorphisms to Recursive Inversion in Tempered Distributions}  


%

\author[1]{Alfredo González-Calvin}[
orcid=0009-0004-2760-7096]

\cormark[1]


\ead{alfredgo@ucm.es}



\affiliation[1]{organization={Dept. of Computer Architecture and Automation, 
Faculty
        of Physics, Complutense University of Madrid},
    addressline={Plaza de las Ciencias, 1}, 
    city={Madrid},
    postcode={28040}, 
    state={Madrid},
    country={Spain}}

\cortext[1]{Corresponding author}



\begin{abstract}
In this work, we construct an explicit,
theoretically rigorous deconvolution method that relies entirely on
iterative forward convolutions, thus can be numerically implemented.
We first prove that convolution with an even
Schwartz kernel acts as an automorphism on the vector space of finite-degree
polynomials. Exploiting the parity of the kernel, we derive an exact
algebraic inverse for this space, expressed uniquely as a finite linear
combination of repeated convolutions. 
The core contribution of this work extends this algebraic inversion to
infinite-dimensional function spaces, including $L^1(\R)$, $L^2(\R)$, the
Schwartz space $\Sw(\R)$, and the space of tempered distributions
$\Sw'(\R)$. By passing the finite-sum polynomial inversion formula
to the limit, we demonstrate that an arbitrary function or distribution
convolved with a Schwartz kernel can be exactly recovered in its respective
topology. The resulting inverse is an explicitly computable limit of a
sequence of linear combinations of recursive convolutions. As a primary
application, this limit provides a fundamentally new, iterative numerical
formula for the inverse of the Weierstrass Transform. By bypassing
traditional numerically ill-posed inversion techniques, our method offers a
mathematically exact and numerically robust algorithm for computational signal
recovery. 
\end{abstract}


\begin{highlights}
    \item Convolution with an even Schwartz kernel is a polynomial automorphism.
    
    \item Exact polynomial deconvolution via finite iterative forward convolutions.
    
    \item Recursive limits extend exact inversion to $L^1$, $L^2$, and distributions.
    
    \item Bypasses ill-posed techniques, recovering functions in their topologies.
    
    \item Yields a robust iterative numerical inverse for the Weierstrass transform.
\end{highlights}

\begin{keywords}
Deconvolution \sep Schwartz-kernel \sep Weierstrass-transform \sep Polynomial Automorphism \sep Distributions
\end{keywords}

\maketitle

\section{Introduction}

The Weierstrass transform with parameter $\ep > 0$ of a function $f : \R \to \R$, is
given, when it exists, by
\begin{equation*}
    F_{\ep}(x) := (f * \varphi_{\ep})(x) = \int_{\R}f(x-s)\varphi_{\ep}(s)ds,
\end{equation*}
where $x\mapsto \varphi_{\ep}(x) :=
\frac{1}{\ep\sqrt{4\pi}}e^{-\frac{x^2}{4\ep^2}}$ is the so called
\textit{Gaussian Kernel}. The Gaussian Kernel is an even Schwartz function that
satisfies several properties. It has unit area, i.e.,
$\int_{\R}\varphi_{\ep} = 1$, it converges to the Dirac delta distribution as
$\ep \to 0^+$, and for any $f \in L^p_{loc}(\R)$, $F_{\ep} \in C^{\infty}(\R)$
and $F_{\ep} \to f$ as $\ep \to 0^+$ in several types of convergence. In
particular if $f \in L^p_{loc}(\R)$ then $F_{\ep} \to f$ as $\ep \to 0^+$
pointwise and in $L^p$. The theoretical foundation of this transform has been
studied extensively in \cite{widder1951necessary}, where its inversion is
presented as a differential operator, and in \cite{bilodeau1962weierstrass},
which explores the relationship between the transform and Hermite polynomials.

While classical inversion formulas provide elegant theoretical solutions, they
often rely on unbounded differential operators or Fourier domain
divisions that can be highly susceptible to numerical instability and noise.
Observing the action of the Weierstrass transform on simple spaces like
polynomials, it is natural to ask how polynomials behave under a more general
convolution kernel, and crucially, if there is an inverse for this operation
which is both mathematically rigorous and simple to compute numerically which
can be extended to a bigger space of functions. That
is, instead of focusing solely on the Gaussian Kernel, we generalize our study
to convolution with arbitrary Schwartz kernels. These questions possess deep
computational applications, since convolution with such a function models
weighted averages, low-pass filters, and various physical phenomena.
Consequently, the process of \textit{deconvolution} --- the restoration of an
original signal from a filtered one --- is an important problem in
computational and applied mathematics.

The theoretical and computational importance of deconvolution is
well-established across applied mathematics. Classical treatments such as
\cite{berenstein1983some} and \cite{hirschman2012convolution} provide systematic
inversion formulas for various convolution transforms, while explicit solutions
to convolution integral equations have historically relied on differentiation
\cite{hohlfeld1993solution} or Fourier domain restrictions
\cite{prost2003deconvolution}. To overcome the numerical vulnerabilities of
these classical analytical methods, modern approaches often utilize limiting
sequences \cite{baeumer2003inversion}, geometric series methods for arbitrary
distribution functions
\cite{kaiser2025deconvolutionarbitrarydistributionfunctions}, and regularized
iterative filtering techniques \cite{LI2018425}. Furthermore, in
finite-dimensional discrete settings, structural algebraic frameworks --- such
as polynomial-based matrix approaches --- have been successful in performing
exact deconvolution \cite{DIAZTOCA201798}. However, when transitioning to
practical numerical implementations, carefully managing convolution matrices and
boundary conditions remains a significant computational challenge
\cite{ZHOU201414, Hansen2002-li}. This underlying complexity has prompted
comparative numerical analyses \cite{madden1996comparison} and continues to
drive modern blind deconvolution techniques \cite{campisi2017blind,
zhang2023fast, XU2025116511, DYKES2021113067} and broader algorithmic frameworks
\cite{Hohage_Marechal_Simar_Vanhems_2024} across diverse physical and medical
applications \cite{ulmer2010inverse, fanton2021convolution, HOWARD201443}.

In this work, we contribute to this field by developing a
novel, computationally viable framework for exact deconvolution. In Section
\ref{sec:TheoreticalResults}, we prove that convolution with an even Schwartz
function is an invertible operator (an automorphism) on the space of
finite-order polynomials (Theorem
\ref{thm:ConvolutionIsAnIsomorphismInPolynomials}). Then, instead 
of taking an algebraic perspective, we also show that
the exact inverse of this operator can be expressed purely as a finite linear
combination of iterative convolutions (Equations \eqref{eq:InverseOfConvolution}
and \eqref{eq:InverseOfConvolutionOdd}). 
The main contribution of this paper, both theoretically and computationally, is
the extension of this recursive inversion framework to broader function spaces
(Theorem \ref{thm:GeneralizationToSchwartzFunctionsAndDistributions}). By
passing the polynomial inversion formula to the limit, we explicitly recover the
original function or tempered distribution as a limiting sequence of linear
combinations of recursive convolutions. This yields a novel, iterative numerical
formula for the inverse of the Weierstrass Transform. We rigorously prove
convergence in the respective topologies ($L^1$, $L^2$, the Schwartz space, and
tempered distributions), ensuring that the method is theoretically correct while
remaining numerically implementable. Finally, in Section \ref{sec:Applications},
we present applications of these results, validating our approach and providing
numerical insight into how this iterative convolution framework translates into
robust signal processing algorithms. We end this work with conclusions in
Section \ref{sec:Conclusions}.

\subsection{Notation}
Before proceeding, it is important to introduce the notation used in this work.
We denote by $\N = \{1,2,\dots\}$ the set of positive integers, and $\N_0 := \N
\cup \{0\}$.  All integrals must be thought with respect to the Lebesgue
measure.  Given a set $X$, we denote by $\id : X \to X$ the identity function
defined as $x\in X \mapsto \id(x) = x$.  The indicator function of a set $A$ is
defined as
\begin{equation*}
    \ind_{A}(x) = \begin{cases} 1 & x \in A, \\ 0 & x \notin A \end{cases}.
\end{equation*}
For a set $A$ we denote its closure as $\overline{A}$. The support of a
real-valued function is defined as
\begin{equation*}
    \supp f := \overline{\{x \in \dom f \mid f(x) \neq 0\}},
\end{equation*}
where $\dom f$ is the domain of $f$. Given $x,y \in \R$
we define $xA + y = \{xa + y \mid x \in A\}$. If $n > 2$ a natural
number, by $\sum_{j \in 2\N_0}^{n}j$ we mean the sum of even nonnegative 
integers up to $n$ if $n$ is even, or up to $n-1$ if $n$ is odd.
The sum $\sum_{j\in2\N}^{n}j$ is then understood as before but for elements in
$2\N$. Given two functions $f,g : \R^d \to \R$ with $d \in \N$ we denote, when
it exists, its convolution as $f * g$. The space of Schwartz functions is
denoted as $\Sw(\R^d)$, while $\SM(\R^d) = \{f \in \Sw(\R^d) \mid \int_{\R^d}f =
1\}$, and $\Sw'(\R^d)$ denotes the space of tempered distributions. For $f \in
\Sw'(\R^d)$ and $\psi \in \Sw(\R^d)$ we denote the image of $\psi$ under $f$
either by $f(\psi)$ or $\dotProduct{f}{\psi}$.  The space of $p \in [1,\infty]$
integrable functions as $L^p(\R^d)$, and the space of locally integrable
functions as $L^p_{loc}(\R^d)$, i.e., $p$-integrable on each compact subset of
$\R^d$. For a function $\varphi : \R^d \to \R$ we will denote by $\varphi_{\ep}
:= \frac{1}{\ep^d}\varphi \circ \frac{\id}{\ep}$ for any $\ep > 0$. We define
the Fourier transform of a function $f : \R \to \R$, when it exists, by
$\mathcal{F}(f)(\xi) = \hat{f}(\xi) := \int_{\R}f(x)e^{-2\pi i\xi x}dx$ for all
$\xi \in \R$, and its inverse Fourier transform by $\mathcal{F}^{-1}(f)$. By
LDCT we refer to the Lebesgue's Dominated Convergence Theorem and by FFT we mean
the Fast Fourier Transform.

\section{Convolution and deconvolution of polynomials and arbitrary 
functions with Schwartz convolution kernels}\label{sec:TheoreticalResults}

Given $\ep > 0$ and even $\varphi \in \Sw(\R)$, it is easy to check that if $p$
is an affine function, then $p * \varphi_{\ep} = p$.  This property does not
hold in general for polynomials $p$ with a degree greater than one, that is, the
fixed points for the (even) convolution operator are constants and affine
functions. However, it is interesting to study if the convolution of a
polynomial results in a polynomial, if this operation can be reversed, and if
the inverse, if it exists, can be extended to more function spaces.

In this section we get the following results for a convolution kernel $\varphi
\in \Sw(\R)$ with $\int_{\R}\varphi \neq 0$ and any polynomial $p$ of degree $n
\in \N_0$ defined in the real numbers.
\begin{itemize}
    \item If $\varphi$ is even, the convolution of $p$ with $\varphi$ results
    in a polynomial of the same degree (Theorem
    \ref{thm:MollificationOfPolynomialsRealLine}). In particular, if the
    original degree of the polynomial is $n\geq 2$, the convolution is the sum
    of the original polynomial with another polynomial of degree $n-2$, one or
    zero, depending on the parity of $n$ (Corollaries
    \ref{cor:CorollaryPolynomials} and \ref{cor:PolynomialsMapping}). In
    particular, the convolution with an even Schwartz function of a polynomial
    is another polynomial with the same asymptotic growth, and in addition the
    first and second leading coefficients are shared between $p$ and its
    convolution.
    \item If $\varphi$ is even, for any $\ep > 0$, the convolution operator $p 
    \mapsto p * \varphi_{\ep}$ is an automorphism between the vector space of
    polynomials of the same degree, for any value of $\ep > 0$ (Theorem
    \ref{thm:ConvolutionIsAnIsomorphismInPolynomials}). This implies that the
    convolution with an even Schwartz kernel in the vector space of polynomials
    is invertible. The inverse can be computed as in Equations
    \eqref{eq:InverseOfConvolution} and \eqref{eq:InverseOfConvolutionOdd}, and
    we provide an insight of its numerical simplicity in Corollary
    \ref{cor:EasyOfNumerical}.  \item We prove in Theorem
    \ref{thm:GeneralizationToSchwartzFunctionsAndDistributions}, that these
    results can be generalized by passing to the limit in Equation
    \eqref{eq:InverseOfConvolution} of Theorem
    \ref{thm:ConvolutionIsAnIsomorphismInPolynomials}, for a certain class of
    functions, and without the need for the Schwartz convolution kernel to be
    even. This results are remarkable, since by an iterative process the
    original function can be restored via convolutions.
    \item Some results can be extrapolated to polynomials in 
    $\R^d$ with $d \geq 2$, see Section \ref{subsec:PolynomialsRd}.
\end{itemize}

Refer to Section \ref{sec:Applications} to several applications of the previous
results.
\subsection{Polynomials in the real line}\label{subsec:PolynomialsRealLine}
Let $p : \R \to \R$ be a polynomial of degree $n \in \N_0$, expressed in the
form
\begin{equation}\label{eq:PolynomialRealLineDefinition}
    p(x) = a_0 + a_1x + \dots + a_nx^n = \sum_{k=0}^{n}a_kx^k, \quad 
    (x \in \R),
\end{equation}
where $\{a_k\}_{k=0}^{n} \subset \R, a_n \neq 0$. The vector space of all
polynomials of degree $n \in \N_0$ will be denoted as $\Poly_n(\R)$. We will say
that $a_{n}$ is the leading coefficient, and $\{a_{n-k}\}_{k=0}^{r}$ are the $r$
leading coefficients for $0 \leq r \leq n$.  For a matter of convenience we will
always exclude the case that $p \equiv 0$, i.e., the zero constant polynomial.

We will show that for any convolution kernel $\varphi \in \SM(\R)$ it holds that
$P_{\ep} := p * \varphi_{\ep}$ is a polynomial of the same degree for $\ep > 0$.
That is, no matter how $\varphi$ is constructed, that polynomials are, in some
sense, invariant under the convolution with Schwartz kernels, i.e.,  the mapping
$p \mapsto P_{\ep}$ maps $\Poly_n(\R)$ into $\Poly_n(\R)$. But we will also
prove that it is an automorphism if $\varphi$ is \textit{even}. But first, we
need a preliminary definition. We are going to restrict ourselves to the set
$\SM(\R)$.  But note that, being the convolution a linear mapping, and given $f
\in \Sw(\R)$, we can define $f / \int_{\R} f \in \SM(\R)$ as long as $\int_{\R}f
\neq 0$. So we restrict ourselves from Schwartz functions with null integral.

\begin{definition}\label{def:PowersOfPolynomialIntegration}
    Let $\varphi \in \SM(\R)$ be even. Define for
    each $m \in \N_0$ the constant
    \begin{equation}\label{eq:ConstantDefinition}
        c_m := \int_{\R}x^m\varphi(x)dx.
    \end{equation}
    It is clear that $c_0 = 1$ and 
    \begin{align*}
        \int_{\R}x^{m}\varphi_{\ep}(x)dx = \begin{cases}
            c_m\ep^m & m \in 2\N_0 \\
            0 & m \in 2\N_0 + 1
        \end{cases}.
    \end{align*}
    Note that the integral is well defined since $\id^m\varphi \in 
    \Sw(\R)$
    for any $m \in \N_0$.
\end{definition}

\begin{remark}
    Note that if $p \in \Poly_n(\R)$, then $\int_{\R}p\varphi_{\ep}$ can
    be considered as a polynomial with even powers in $\ep$ for any even 
    $\varphi \in \SM(\R)$. Indeed if $n > 2$ is even
    \begin{equation*}
        \ep \mapsto h(\ep):=\int_{\R}p(x)\varphi_{\ep}(x)dx = 
        \int_{\R}\sum_{k=0}^{n}a_kx^k\varphi_{\ep}(x)dx = a_0 + a_2c_2\ep^2 
        + \dots + a_nc_n\ep^n,
    \end{equation*}
    while the result is clear if $n$ is odd. Moreover, it is clear
    that if $\varphi \in \Sw(\R)$ is neither even nor odd, we can still
    define $c_m$ in the same way.
\end{remark}

\begin{theorem}\label{thm:MollificationOfPolynomialsRealLine}
    Let $p \in \Poly_n(\R)$, $\varphi \in \SM(\R)$ be 
    even and $\ep >0$.
    
    It holds that 
    \begin{equation}\label{eq:PolyMollificationRealLine}
        P_{\ep}(x) = (p * \varphi_{\ep})(x) = \sum_{j=0}^{n}\sum_{k\in 
            2\N_0}
        ^{j}\binom{j}{k}a_jx^{j-k}c_k\ep^k, \quad (x \in \R).
    \end{equation}
Therefore, the mapping $p \mapsto P_{\ep}$ is a mapping of $\Poly_n(\R)$
into itself.
\end{theorem}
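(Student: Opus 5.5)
The formula follows from a direct computation. Expand the convolution, use the binomial theorem, and then apply the moment identities of Definition \ref{def:PowersOfPolynomialIntegration}.

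First, I use commutativity of convolution to write
\begin{equation*}
P_{\ep}(x) = \int_{\R} p(x-s)\varphi_{\ep}(s)\,ds = \sum_{j=0}^{n} a_j \int_{\R}(x-s)^j\varphi_{\ep}(s)\,ds .
\end{equation*}
The integral is absolutely convergent for each fixed $x$. Indeed, $|p(x-s)|$ is bounded by a polynomial in $s$, and $\id^m\varphi_{\ep}\in\Sw(\R)$ is integrable for every $m$. So splitting the finite sum over $j$ is legitimate.

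Next, I expand $(x-s)^j=\sum_{k=0}^{j}\binom{j}{k}x^{j-k}(-1)^k s^k$ by the binomial theorem. Interchanging the finite sum and the integral gives
\begin{equation*}
\int_{\R}(x-s)^j\varphi_{\ep}(s)\,ds=\sum_{k=0}^{j}\binom{j}{k}(-1)^k x^{j-k}\int_{\R}s^k\varphi_{\ep}(s)\,ds .
\end{equation*}
The substitution $s=\ep t$ gives $\int_{\R} s^k\varphi_{\ep}(s)\,ds=\ep^k c_k$. When $k$ is odd, evenness of $\varphi$ makes the integrand odd, so the integral is $0$; this is the case distinction stated in Definition \ref{def:PowersOfPolynomialIntegration}. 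Only even $k$ survive, and for those $(-1)^k=1$. The factor $(-1)^k$ therefore disappears, and we obtain exactly \eqref{eq:PolyMollificationRealLine} with the inner sum over $k\in 2\N_0$, $k\le j$.

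Finally, I verify that $P_{\ep}\in\Poly_n(\R)$. Each summand is a monomial in $x$ of degree $j-k\le n$. The only contribution of degree $n$ comes from $j=n$, $k=0$, and it equals $a_n c_0 x^n=a_n x^n$ because $c_0=\int_{\R}\varphi=1$ for $\varphi\in\SM(\R)$. Since $a_n\neq 0$, the degree is exactly $n$, so $p\mapsto P_{\ep}$ maps $\Poly_n(\R)$ into itself.

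I do not expect a real obstacle here. The only points needing care are the integrability that justifies the term-by-term interchange, and keeping track of the sign $(-1)^k$, which vanishes precisely because odd moments are zero.
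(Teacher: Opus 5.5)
Your proposal is correct and follows essentially the same route as the paper. Both expand $(x-s)^j$ by the binomial theorem, integrate term by term against $\varphi_{\ep}$ using the moments $c_k\ep^k$ (odd moments vanish by evenness, so $(-1)^k=1$ on the surviving terms), and use linearity over the finite sum in $j$. Your extra check that the $x^n$ coefficient equals $a_n c_0 = a_n$ slightly sharpens the paper's conclusion, which only records that $P_{\ep}$ has degree at most $n$.
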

\begin{proof}
    Let $m \in \N$. We know by the Binomial theorem that for all $x 
    \in \R$,
    \begin{align*}
        \int_{\R}(x-y)^{m}\varphi_{\ep}(y)dy &= 
        \int_{\R}\sum_{k=0}^{m}\binom{m}{k}(-1)^{k}x^{m-k}y^k\varphi_{\ep}(y)dy\
        &= 
        \sum_{k=0}^{m}\binom{m}{k}(-1)^{k}x^{m-k}\int_{\R}y^k\varphi_{\ep}(y)dy
        \\
        &= \sum_{k\in 2\N_0}^{m}\binom{m}{k}x^{m-k}c_k\ep^k,
    \end{align*}
    where the last equality comes from Definition 
    \ref{def:PowersOfPolynomialIntegration} and the fact that $(-1)^k = 1$ 
    for even $k$. Now note that since the convolution is a 
    bilinear operation
    \begin{align*}
        P_{\ep}(x) = \sum_{j=0}^{n}a_j\int_{\R}(x-y)^{j}\varphi_{\ep}(y)dy 
        = \sum_{j=0}^{n}a_j\sum_{k\in 
            2\N_0}^{j}\binom{j}{k}x^{j-k}c_k\ep^k, 
        \quad (x \in \R),
    \end{align*}
    and since the sums are finite, we can use Fubini's Theorem to 
    end the proof of Equation \eqref{eq:PolyMollificationRealLine}.
    Note that the right hand side of Equation
    \eqref{eq:PolyMollificationRealLine} is again a polynomial of at most degree
    $n$, proving the theorem.
\end{proof}
It is clear and immediate that if $\varphi$ is not even we still get
a polynomial, the fact that we are restricting to even convolution kernels
will become apparent in Theorem 
\ref{thm:ConvolutionIsAnIsomorphismInPolynomials}.

We now can improve upon the results of Theorem 
\ref{thm:MollificationOfPolynomialsRealLine} using the following corollary.
\begin{corollary}\label{cor:CorollaryPolynomials}
    Let $p \in \Poly_n(\R)$, with $ n \geq 2$ and $\varphi \in 
    \SM(\R)$ be 
    even and let $\ep > 
    0$. It holds that
    \begin{equation}\label{eq:RelationShipPolyMolliPoly}
        P_{\ep}(x) = p(x) + \sum_{j =0}^{n-2}x^{j}\left(\sum_{k\in 
            2\N}^{n-j}\binom{k+j}{k}c_k\ep^ka_{k+j}\right), \quad (x \in \R).
    \end{equation}
    That is, the convolution of a polynomial results in another polynomial
    with the same asymptotic growth, in fact the first two leading coefficients
    of $P_{\ep}$ are the same as those of $p$.
\end{corollary}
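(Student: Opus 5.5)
The plan is to start from Equation \eqref{eq:PolyMollificationRealLine} of Theorem \ref{thm:MollificationOfPolynomialsRealLine} and reorganize the double sum by powers of $x$ rather than by $j$. First I separate the $k=0$ term of the inner sum. Since $c_0=1$ and $\binom{j}{0}=1$, it contributes $\sum_{j=0}^{n}a_jx^j = p(x)$. So
\begin{equation*}
    P_{\ep}(x) = p(x) + \sum_{j=0}^{n}\sum_{k\in 2\N}^{j}\binom{j}{k}a_jx^{j-k}c_k\ep^k .
\end{equation*}

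Next I reindex the remaining finite double sum over pairs $(j,k)$ with $k\in 2\N$, $k\le j\le n$. I substitute $i=j-k$ for the exponent of $x$, so $j=k+i$. The constraints become $i\ge 0$, $k\ge 2$ even, and $k+i\le n$, that is $k\le n-i$. A nonempty inner range requires $n-i\ge 2$, so $i$ runs from $0$ to $n-2$. Exchanging the order of these finite sums and renaming $i$ as $j$ gives exactly the right-hand side of Equation \eqref{eq:RelationShipPolyMolliPoly}, with coefficient $\sum_{k\in2\N}^{n-j}\binom{k+j}{k}c_k\ep^ka_{k+j}$ for $x^j$. The hypothesis $n\ge 2$ only ensures the correction sum is not vacuous. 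For $n<2$ it would be empty, matching the earlier remark that affine functions are fixed.

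For the final claim, the correction term is a polynomial of degree at most $n-2$. Hence the coefficients of $x^n$ and $x^{n-1}$ in $P_{\ep}$ coincide with $a_n$ and $a_{n-1}$. In particular $P_{\ep}$ has leading coefficient $a_n\neq 0$, so it has degree exactly $n$ and the same asymptotic growth as $p$.

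The only step needing care is the index bookkeeping in the reindexing: checking that the upper limits ($k\le n-j$, $j\le n-2$) and the parity restriction are translated correctly. I will verify this by confirming the bijection between the two index sets. The analysis is already contained in Theorem \ref{thm:MollificationOfPolynomialsRealLine}.
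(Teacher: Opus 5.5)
Your proof is correct, but it takes a different route from the paper. The paper proves Equation \eqref{eq:RelationShipPolyMolliPoly} by induction on the degree. It checks $n=2$ directly. For $q$ of degree $n+1$, it splits off the top monomial $a_{n+1}x^{n+1}$ and applies the induction hypothesis to the remaining polynomial $r$. It then shows that the new terms $\binom{n+1}{k}a_{n+1}x^{n+1-k}c_k\ep^k$, $k\in 2\N$, fill exactly the missing slots $k+j=n+1$ of the double sum. This is worked out explicitly only for even $n$, with the odd case declared similar.

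You instead perform the reindexing $i=j-k$ once, on the whole finite double sum from Theorem \ref{thm:MollificationOfPolynomialsRealLine}, after isolating the $k=0$ terms that reassemble $p$. Both arguments establish the same identity: the paper's induction step is just your bijection $(j,k)\mapsto(j-k,k)$, checked one top-degree monomial at a time. Your version is shorter, needs no base case, and treats both parities of $n$ uniformly, so it avoids the ``similar'' hand-wave. It also makes clear that the corollary is pure bookkeeping on Theorem \ref{thm:MollificationOfPolynomialsRealLine}. The degree-$n$ claim and the shared top-two-coefficients claim then follow at once, since the correction term visibly has degree at most $n-2$.

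The paper's inductive form mirrors the recursive, degree-lowering structure used later in Corollary \ref{cor:PolynomialsMapping} and Theorem \ref{thm:ConvolutionIsAnIsomorphismInPolynomials}. Nothing in those results depends on it, so that is its only advantage.
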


\begin{proof}
    We proceed for the even case by induction over $n\geq 2$. Suppose first
    that $n = 2$, then it is straightforward to see using Theorem 
    \ref{thm:MollificationOfPolynomialsRealLine} that 
    \begin{equation*}
        P_{\ep}(x) = a_0 + a_1x + a_2x^2 + a_2c_2\ep^2, \quad (x \in \R),
    \end{equation*}
    and so Equation \eqref{eq:RelationShipPolyMolliPoly} is valid for 
    $n = 2$. Let's suppose it is true for $n \in \N$ and proceed to 
    prove that it is also true for $n+1$. Let $q \in \Poly_{n+1}(\R)$, and 
    denote by $r \in \Poly_{n}(\R)$ the polynomial of degree $n$ obtained 
    by making the leading coefficient of $q$ to be zero. Then
    by Theorem \ref{thm:MollificationOfPolynomialsRealLine},
    \begin{align*}
        Q_{\ep}(x) &:= (q * \varphi_{\ep})(x) = \sum_{j=0}^{n+1}\sum_{k\in 
            2\N_0}
        ^{j}\binom{j}{k}a_jx^{j-k}c_k\ep^k \\
        &= \sum_{j=0}^{n}\sum_{k\in 
            2\N_0}
        ^{j}\binom{j}{k}a_jx^{j-k}c_k\ep^k + \sum_{k\in 
            2\N_0}
        ^{n+1}\binom{n+1}{k}a_{n+1}x^{n+1-k}c_k\ep^k \\
        &= (r * \varphi_{\ep})(x) + \sum_{k\in 
            2\N_0}
        ^{n+1}\binom{n+1}{k}a_{n+1}x^{n+1-k}c_k\ep^k \\
        &= \left[r(x) + \sum_{j=0}^{n-2}x^{j}\left(\sum_{k\in 
            2\N}^{n-j}\binom{k+j}{k}c_k\ep^ka_{k+j}\right) \right]
        +\sum_{k\in 
            2\N_0}
        ^{n+1}\binom{n+1}{k}a_{n+1}x^{n+1-k}c_k\ep^k \\
        &= q(x) + \sum_{j =0}^{n-2}x^{j}\left(\sum_{k\in 
            2\N}^{n-j}\binom{k+j}{k}c_k\ep^ka_{k+j}\right) 
        +\sum_{k\in 
            2\N}
        ^{n+1}\binom{n+1}{k}a_{n+1}x^{n+1-k}c_k\ep^k.
    \end{align*}
    Suppose that $n$ is even, then
    \begin{align*}
        \sum_{k\in 
            2\N}
        ^{n+1}\binom{n+1}{k}a_{n+1}x^{n+1-k}c_k\ep^k
        = x&^{n-1}\binom{n+1}{2}c_2\ep^2 a_{n+1} + 
        x^{n-3}\binom{n+1}{4}c_4\ep^4 a_{n+1} + \dots + 
        x\binom{n+1}{n}c_n\ep^n a_{n+1},
    \end{align*}
    which implies
    \begin{align*}
        &\sum_{j =0}^{n-2}x^{j}\left(\sum_{k\in 
            2\N}^{n-j}\binom{k+j}{k}c_k\ep^ka_{k+j}\right) 
        + 
        \sum_{k\in 
            2\N}
        ^{n+1}\binom{n+1}{k}a_jx^{n+1-k}c_k\ep^k 
        = 
        \sum_{j=0}^{n-1}x^j\left(\sum_{k\in 
            2\N}^{n+1-j}\binom{k+j}{k}c_k\ep^{k}a_{k+j}\right),
    \end{align*}
    completing the proof for even $n$. If $n$ is odd the proof 
    is similar.
\end{proof}
\begin{remark}\label{rmk:NotationSidePolynomial}
    Given $p \in \Poly_{n}(\R)$ and $n \geq 2$. We will denote by
    \begin{equation*}
        b_{j} := \sum_{k\in 
            2\N}^{n-j}\binom{k+j}{k}c_k\ep^ka_{k+j}, \quad j \in 
        \{0,1,\dots,n-2\},
    \end{equation*}
    In doing so Equation \eqref{eq:RelationShipPolyMolliPoly} can be written as 
    \begin{equation*}
        P_{\ep}(x) = p(x) + \pp(x), \quad (x \in \R),
    \end{equation*}
    where $\pp \in \Poly_{n-2}(\R)$ with coefficients $\{b_j\}_{j=0}^{n-2}$,
    i.e., $\pp(x) = b_0 + b_1x + \dots + b_{n-2}x^{n-2}$ for all $x \in 
    \R$. Note that $b_{n-2} \neq 0$, since $b_{n-2} = c_2\ep^2a_n\binom{n}{2}$
    and $a_n \neq 0, c_2 \neq 0$. 
    This leads to the 
    following corollary.
\end{remark}
\begin{corollary}\label{cor:PolynomialsMapping}
    Let $p \in \Poly_n(\R)$ with $n \geq 2$, $\varphi \in \SM(\R)$ be even and
    $\ep > 0$. Define by $T_{\ep} : \Poly_n(\R) \to \Poly_n(\R)$  the map
    \begin{equation*}
        p \mapsto T_{\ep}(p) = p * \varphi_{\ep}.
    \end{equation*}
    Given $k \in \N_0$ denote the composition of $T_{\ep}$ with itself $k$
    times as $T_{\ep}^k$, i.e., the convolution applied recursively $k$ 
    times, with $T_{\ep}^{0} := \id$. The following equation is true
    \begin{equation}\label{eq:CompositionFolding}
        T_{\ep}^k(p) = \sum_{j=0}^{k}\binom{k}{j}p_j,
    \end{equation}
    where $p_j$ is defined recursively as follows. If $j = 0$ then
    $p_0 := p$. If $j = 1$ then $p_j := \pp$ as in Remark 
    \ref{rmk:NotationSidePolynomial}, if $j = 2$ then $p_j := \tilde{\pp}$,
    which comes from $T(\pp) = \pp + \tilde{\pp}$, as in Remark
    \ref{rmk:NotationSidePolynomial}, and so on.
    Moreover the following statements are also true.
    \begin{enumerate}
        \item Suppose $n$ is even. If $k \in \N$ satisfies that 
        $2k \leq n$ then
        $p_j \in \Poly_{n-2j}(\R)$ for $j \in \{0,\dots,k\}$. If $2k > n$ 
        then, $p_j \in \Poly_{n-2j}(\R)$ for 
        $j \in \{0,\dots,\frac{n}{2}-1\}$ and $p_j \in \Poly_0(\R)$ for $j 
        \in \{\frac{n}{2},\dots,k\}$, that is the last $k-\frac{n}{2}$ 
        polynomials are 
        constants.
        \item Suppose $n$ is odd. If $k \in \N$ satisfies that 
        $2k < n$ then
        $p_j \in \Poly_{n-2j}(\R)$ for $j \in \{0,\dots,k\} \subset 
        \{0,\dots,\frac{n-1}{2}\}$. 
        If $2k > n$ then $p_j \in \Poly_{n-2j}(\R)$ for $j \in 
        \{0,\dots,\frac{n-1}{2}-1\}$ and $p_j \in \Poly_{1}(\R)$
        for $j \in \{\frac{n-1}{2},\dots,k\}$, that is, the last 
        $k-\frac{n-1}{2}$ polynomials are affine functions.
    \end{enumerate}
\end{corollary}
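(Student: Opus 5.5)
The plan is to write $T_\ep = \id + D$, where $D(p) := \pp$ is the ``side polynomial'' map from Remark \ref{rmk:NotationSidePolynomial}, so that $p_j = D^j(p)$. Extending the definition of $\pp$ to polynomials of degree $0$ and $1$, we have $D(p)=0$ there, since $P_\ep = p$ when $p$ is affine. The first step is to check that $D$ is linear on $\Poly_n(\R)$ (more precisely, on polynomials of degree at most $n$). This is immediate: $D = T_\ep - \id$, and convolution is linear. Since $\id$ commutes with $D$, the binomial theorem in the algebra of linear operators gives
\begin{equation*}
T_\ep^k = (\id + D)^k = \sum_{j=0}^{k}\binom{k}{j}D^j ,
\end{equation*}
and applying this to $p$ yields Equation \eqref{eq:CompositionFolding}. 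As an alternative that avoids operator algebra, I would run an induction on $k$. Assume $T^k_\ep(p)=\sum_j\binom{k}{j}p_j$. Then
\begin{equation*}
T^{k+1}_\ep(p)=\sum_j\binom{k}{j}(p_j+p_{j+1}),
\end{equation*}
and Pascal's rule $\binom{k}{j}+\binom{k}{j-1}=\binom{k+1}{j}$ closes the induction.

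For the degree statements, I would use the fact recorded in Remark \ref{rmk:NotationSidePolynomial}: if $q$ has degree $m\ge 2$, then $D(q)$ has degree exactly $m-2$, with leading coefficient $\binom{m}{2}c_2\ep^2 a_m\neq 0$. Induction on $j$ then gives $\deg p_j = n-2j$ as long as $n-2(j-1)\ge 2$. In the even case this covers every $j\le n/2$, so $p_{n/2}$ is a nonzero constant. In the odd case it covers every $j\le (n-1)/2$, so $p_{(n-1)/2}$ is a degree-one polynomial. Both statements require $c_2\neq 0$, which I would take from the remark as the paper does.

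The delicate point is what happens past those indices. Since $D$ annihilates affine functions, every subsequent $p_j$ is actually $0$. The statement calls these ``constants'' or ``affine functions'' in $\Poly_0(\R)$ and $\Poly_1(\R)$, so I would read it with the convention that the zero polynomial is allowed there, even though the paper elsewhere excludes it. I would also point out that the index ranges in the statement are slightly inaccurate. In the even case, $p_{n/2}$ itself lies in $\Poly_{n-2j}(\R) = \Poly_0(\R)$, so the two ranges overlap harmlessly. In the odd case, the constraint $2k<n$ is the natural one, since $2k=n$ cannot occur. Neither point affects the argument. The only real obstacle I expect is checking that $D$ is well defined and linear on low-degree inputs, which follows once $D$ is defined as $T_\ep-\id$ rather than through the coefficient formula.
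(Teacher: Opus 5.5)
Your proposal is correct and is essentially the paper's argument: the Pascal-rule induction on $k$ (which your expansion of $(\id+D)^k$ with $D=T_{\ep}-\id$ simply packages), followed by the degree drop from Remark~\ref{rmk:NotationSidePolynomial}. Your observation that $p_j=0$ past the threshold index is exactly what the paper's ``$T_{\ep}(p_j)=p_j$'' step implies without saying so. One caveat: $c_2\neq 0$ is not automatic for an even $\varphi\in\SM(\R)$ (it holds for nonnegative kernels, but $\varphi(x)=\pi^{-1/2}(\tfrac32-x^2)e^{-x^2}$ has $c_2=0$), so it is better to drop the exact-degree claims; the statement does not need them once $\Poly_{m}(\R)$ is read, as you already do, as polynomials of degree at most $m$.
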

\begin{proof}
    We first proceed with the proof of Equation 
    \eqref{eq:CompositionFolding}.
    Fix $n \in \N$ with $n \geq 2$. We proceed by induction over $k$. For $k
    = 1$ it holds that 
    \begin{equation*}
        T_{\ep}(p) = P_{\ep} = p + \pp = p + p_1,
    \end{equation*}
    where $p_1 \in \Poly_{n-2}(\R)$ by Corollary 
    \ref{cor:CorollaryPolynomials}. Note that $T$ is linear, which implies
    \begin{equation*}
        T_{\ep}(T_{\ep}(p)) = T_{\ep}(p+p_1) = T_{\ep}(p) + 
        T_{\ep}(p_1) = p 
        + p_1 + p_1 + 
        p_2 = p + 2p_1 + p_2.
    \end{equation*}
    So the statement is proven for $k = 2$, because $p_1 \in 
    \Poly_{n-2}(\R)$ by Corollary \ref{cor:CorollaryPolynomials}
    and Remark \ref{rmk:NotationSidePolynomial}, which by the same
    conclusions imply that $p_2 \in \Poly_{\max\{n-4,0\}}(\R)$.
    
    Suppose the statement is true for $k \in \N$, and $2k \leq n$, that is
    $T_{\ep}^k(p) = \sum_{j=0}^{k}\binom{k}{j}p_j$, with $p_j \in
    \Poly_{n-2j}(\R)$ with $j \in \{0,\dots,k\}$. Note $T_{\ep}$ is linear by
    the properties of the convolution, thus
    \begin{equation*}
        T_{\ep}^{k+1}(p) = T_{\ep}(T_{\ep}^k(p)) = 
        T_{\ep}\left(\sum_{j=0}^{k}\binom{k}{j}p_j\right) = 
        \sum_{j=0}^{k}\binom{k}{j}T_{\ep}(p_j) = 
        \sum_{j=0}^{k}\binom{k}{j}(p_j + p_{j+1}),
    \end{equation*}
    and then
    \begin{align*}
        T_{\ep}^{k+1}(p) &= 
        \sum_{j=1}^{k}\left(\binom{k}{j}+\binom{k}{j-1}\right)p_j + p_0 
        + p_{k+1} 
        = \sum_{j=1}^{k}\binom{k+1}{j}p_j + p_0 + p_{k+1} 
        = \sum_{j=0}^{k+1}\binom{k+1}{j}p_j,
    \end{align*}
    where we used the fact that $\binom{k}{j}+\binom{k}{j-1}=\binom{k+1}{j}$.
    Moreover, $p_{k} \in \Poly_{n-2k}(\R)$, which implies by 
    Corollary \ref{cor:CorollaryPolynomials} and Remark
    \ref{rmk:NotationSidePolynomial} that $p_{k+1} \in 
    \Poly_{n-2(k+1)}(\R)$. This proves Equation 
    \eqref{eq:CompositionFolding}. Note that the formula
    is always valid, independently if $n$ is even or odd,
    so what we need now is to prove it for
    when $2k > n$.
    
    \begin{enumerate}
        \item Suppose $n$ is even and $2k \leq n$, then the previous 
        discussion is valid, and so $p_j \in \Poly_{n-2j}(\R)$
        for $j \in \{0,\dots,k\}$. Suppose that $2k > n$, then
        $k > \frac{n}{2}$ and for $j = n/2$ it holds by the 
        previous proof that $p_{j} \in \Poly_{0}(\R)$, i.e.,
        it is a constant. Since constant functions are invariant 
        under convolution with $\varphi \in \SM(\R)$, it holds that 
        $T_{\ep}(p_j) = p_j$ for 
        $j \in \{\frac{n}{2},\dots,k\}$, and this proves
        the first assertion.
        \item Suppose $n$ is odd and $2k < n$, then $2k \leq 
        n-1$, so the previous discussion is valid, that is
        $p_j \in \Poly_{n-2j}(\R)$ for $j \in \{0,\dots,k\}$.
        Now, if $2k > n$ then $2k \geq n+1$, and for 
        $j = \frac{n-1}{2}$ it holds that $p_j \in \Poly_{n-2j}(\R)
        = \Poly_{1}(\R)$, but then $p_j$ is affine, which implies
        that $T_{\ep}(p_j) = p_j$ for $j = \frac{n-1}{2}$ and so for 
        $j \in \{\frac{n-1}{2},\dots,k\}$, proving the second assertion.
    \end{enumerate}
\end{proof}

It will be useful to compute the polynomials
$p_j$ in Corollary \ref{cor:PolynomialsMapping} as linear combinations of 
repeated convolutions of $p_0 = p$, i.e., of the original polynomial. This will
allow to obtain a formula for the inversion of the convolution on the space
$\Poly_n(\R)$. We prove that this is possible in the next lemma.
\begin{lemma}\label{lem:ComputationOfSidePolynomials}
    Let $p \in \Poly_{n}(\R)$ with $n \geq 2$, $\varphi \in \SM(\R)$
    be even and $\ep > 0$. Defining $T_{\ep} : \Poly_{n}(\R) \to 
    \Poly_{n}(\R)$
    as in Corollary \ref{cor:PolynomialsMapping} as well as $p_j$
    for $j \in \N_0$, then
    \begin{equation}\label{eq:SidePolynomials}
        p_j = \sum_{k=0}^{j}\binom{j}{k}(-1)^{k}T_{\ep}^{j-k}(p),
    \end{equation}
    where $T_{\ep}^0 := \id$ for all $\ep > 0$  and $p_0 := p$.
\end{lemma}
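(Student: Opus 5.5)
The recursive definition of the $p_j$ in Corollary \ref{cor:PolynomialsMapping} says precisely that $T_{\ep}(p_j) = p_j + p_{j+1}$ for every $j \in \N_0$. In the degenerate range where $p_j$ is constant or affine we have $T_{\ep}(p_j)=p_j$; consistently with Corollary \ref{cor:PolynomialsMapping}, the next polynomial $p_{j+1}$ is then read as the zero polynomial. Equivalently, writing $D_{\ep} := T_{\ep} - \id$, the recursion reads $p_{j+1} = D_{\ep}(p_j)$. So the plan is to show $p_j = D_{\ep}^j(p)$ and then expand $(T_{\ep}-\id)^j$ by the binomial theorem. The expansion is legitimate because $T_{\ep}$ and $\id$ commute as linear maps on $\Poly_n(\R)$. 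This gives exactly Equation \eqref{eq:SidePolynomials}.

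Concretely, I would proceed by induction on $j$. For $j=0$ the right-hand side is $T_{\ep}^0(p) = p = p_0$. Assume the formula holds for $j$. By linearity of $T_{\ep}$,
\begin{align*}
p_{j+1} &= T_{\ep}(p_j) - p_j = \sum_{k=0}^{j}\binom{j}{k}(-1)^k T_{\ep}^{j+1-k}(p) - \sum_{k=0}^{j}\binom{j}{k}(-1)^k T_{\ep}^{j-k}(p).
\end{align*}
In the second sum I shift the index $k \mapsto k-1$, which turns it into $+\sum_{k=1}^{j+1}\binom{j}{k-1}(-1)^{k}T_{\ep}^{j+1-k}(p)$. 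Then I combine the two sums using Pascal's rule $\binom{j}{k}+\binom{j}{k-1}=\binom{j+1}{k}$, exactly as in the proof of Corollary \ref{cor:PolynomialsMapping}. The boundary terms $k=0$ and $k=j+1$ supply the missing summands. This yields $\sum_{k=0}^{j+1}\binom{j+1}{k}(-1)^kT_{\ep}^{j+1-k}(p)$.

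Alternatively, one can invert Equation \eqref{eq:CompositionFolding} directly, since $T_{\ep}^k(p)=\sum_j\binom{k}{j}p_j$ is a binomial transform. Its inverse is the signed binomial transform, via the identity $\sum_{k}\binom{j}{k}\binom{j-k}{i}(-1)^k = \delta_{ij}$. I would use this only as a cross-check.

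The only delicate point is the bookkeeping for the degenerate indices, where $p_j$ has become constant or affine. I would make explicit that the defining relation $T_{\ep}(p_j)=p_j+p_{j+1}$, with $p_{j+1}=0$ in those cases, is what underlies Equation \eqref{eq:CompositionFolding}. Once that is fixed, the induction above is purely algebraic and needs no information on degrees.
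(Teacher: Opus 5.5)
Your proposal is correct and follows essentially the same route as the paper: induction on $j$, writing $p_{j+1}=T_{\ep}(p_j)-p_j$, shifting the index in the second sum, and merging the two sums with Pascal's rule. Your explicit remark that $p_{j+1}$ must be the zero polynomial once $p_j$ is constant or affine is a worthwhile clarification, because the paper's Corollary \ref{cor:PolynomialsMapping} is ambiguous there, and that reading is exactly what makes the formula hold for every $j\in\N_0$.
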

\begin{proof}
    We proceed by induction over $j$. By definition it is true for $j=0$, 
    and for $j = 1$ we know by Theorem 
    \ref{thm:MollificationOfPolynomialsRealLine} and Corollary 
    \ref{cor:PolynomialsMapping} that $T_{\ep}(p_0) = p_0 + p_1$ and 
    therefore
    $p_1 = T_{\ep}(p_0) - p_0$ for which Equation
    \eqref{eq:SidePolynomials}
    clearly holds. Suppose now that it is true for $j \in \N$. Then 
    $T_{\ep}(p_j) = p_j + p_{j+1}$ and so
    \begin{align*}
        p_{j+1} = T_{\ep}(p_j) - p_{j} &= 
        \sum_{k=0}^{j}\binom{j}{k}(-1)^{k}T_{\ep}^{j+1-k}(p_0) + 
        \sum_{k=0}^{j}\binom{j}{k}(-1)^{k+1}T_{\ep}^{j-k}(p_0) \\
        &=\sum_{k=0}^{j}\binom{j}{k}(-1)^{k}T_{\ep}^{j+1-k}(p_0) +
        \sum_{k=1}^{j+1}\binom{j}{k-1}(-1)^{k}T_{\ep}^{j+1-k}(p_0) \\
        &= T_{\ep}^{j+1}(p_0) + (-1)^{j+1}T_{\ep}(p_0) +
        \sum_{k=1}^{j}\binom{j+1}{k}(-1)^{k}T_{\ep}^{j+1-k}(p_0) \\
        &= 
        \sum_{k=0}^{j+1}\binom{j+1}{k}(-1)^{k}T_{\ep}^{j+1-k}(p_0),
    \end{align*}
    where the fact $\binom{j+1}{k} = \binom{j}{k-1} + \binom{j}{k}$ has been
    used in the second to last inequality. This proves Equation 
    \eqref{eq:SidePolynomials}, thus the lemma.
\end{proof}

We now prove that the mapping $T_{\ep}$, defined in Corollary
\ref{cor:PolynomialsMapping}, is an isomorphism on the vector space
$\Poly_{n}(\R)$. Rather than exploiting the finite-dimensional nature of
$\Poly_n(\R)$ to construct the matrix for the linear isomorphism $T_{\ep}$, our
proof proceeds recursively by using the results established in
Corollary \ref{cor:CorollaryPolynomials} and Remark
\ref{rmk:NotationSidePolynomial}. We adopt this recursive approach because it
serves as the foundational mechanism that will allow us to extend the inversion
to infinite-dimensional spaces of arbitrary functions and distributions in
Theorem \ref{thm:GeneralizationToSchwartzFunctionsAndDistributions}.
Nevertheless, this isomorphism establishes a remarkable structural property: within
$\Poly_{n}(\R)$, convolution with an even Schwartz kernel is always an exactly
reversible operation. Consequently, this framework can be viewed as an extension
of the Weierstrass transform applied to polynomials, where $\varphi \in \SM(\R)$
represents the even Gaussian kernel (see \cite{bilodeau1962weierstrass} for a
related discussion on Hermite polynomials and the Weierstrass function). 
Finally, we emphasize that the parity of the convolution kernel plays
a critical role in the following proof. Its necessity is twofold: first, for
any $p \in \Poly_{n}(\R)$, the operator yields $T_{\ep}(p) = p + p_1$, where
$p_1 \in \Poly_{n-2}(\R)$. Second, it ensures that affine and constant functions
remain completely invariant (i.e., $T_{\ep}(p) = p$). This degree reduction is
precisely what enables both the proof of the automorphism and the explicit,
finite-sum computation of its inverse.

\begin{theorem}\label{thm:ConvolutionIsAnIsomorphismInPolynomials}
    Let $n \in \N$, $\varphi \in \SM(\R)$ be even and 
    $\ep > 0$.
    Then $T_{\ep} : \Poly_{n}(\R) \to \Poly_{n}(\R)$, defined as in 
    Corollary \ref{cor:PolynomialsMapping} is an automorphism. 
    Moreover, given $p \in \Poly_{n}(\R)$ with $n$ even
    \begin{equation}\label{eq:InverseOfConvolution}
        T_{\ep}^{-1}(p) = \sum_{j=0}^{n/2}(-1)^{j}p_j =
        \sum_{j=0}^{n/2}\sum_{k=0}^{j}(-1)^{j+k}\binom{j}{k}T_{\ep}^{j-k}(p) =
        \sum_{j=0}^{n/2}(-1)^{n/2-j}\binom{n/2+1}{j}T_{\ep}^{n/2-j}(p),
    \end{equation}
    where $p_0 := p$, and $T_{\ep}^{0} := \id$,
    while if $n$ is odd
    \begin{equation}\label{eq:InverseOfConvolutionOdd}
        T_{\ep}^{-1}(p) = \sum_{j=0}^{(n-1)/2}(-1)^{j}p_j = 
        \sum_{j=0}^{(n-1)/2}(-1)^{\frac{n-1}{2}-j}\binom{\frac{n+1}{2}}{j}T_{\ep}^{\frac{n-1}{2}-j}(p).
    \end{equation}
\end{theorem}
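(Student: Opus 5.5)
The plan is to write $T_{\ep} = \id + N_{\ep}$ on $\Poly_n(\R)$, where $N_{\ep} := T_{\ep} - \id$, and to show that $N_{\ep}$ is nilpotent. Here $\Poly_n(\R)$ is read as the vector space of polynomials of degree at most $n$. By Lemma \ref{lem:ComputationOfSidePolynomials} with $j=1$, and by the recursive definition in Corollary \ref{cor:PolynomialsMapping}, we have $p_{j+1} = T_{\ep}(p_j) - p_j = N_{\ep}(p_j)$, so $p_j = N_{\ep}^j(p)$. Equation \eqref{eq:SidePolynomials} is then just the binomial expansion of $(T_{\ep}-\id)^j$, which is legitimate because $T_{\ep}$ and $\id$ commute.

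First, degree reduction. Corollary \ref{cor:CorollaryPolynomials} and Remark \ref{rmk:NotationSidePolynomial} show that $N_{\ep}$ sends a polynomial of degree $d\ge 2$ to one of degree exactly $d-2$. The opening observation of Section \ref{sec:TheoreticalResults}, that affine functions are fixed by even kernels, shows $N_{\ep}$ vanishes on $\Poly_1(\R)$. Put $m := \lfloor n/2\rfloor$, i.e. $m = n/2$ or $m = (n-1)/2$. Then $N_{\ep}^m(p)$ is constant (for $n$ even) or affine (for $n$ odd), and in either case $N_{\ep}^{m+1}(p) = 0$ for every $p \in \Poly_n(\R)$. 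This uses only that lower-degree pieces also land in degree $\le 1$ after $m$ steps, which follows by linearity.

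Second, invert $\id + N_{\ep}$ by the finite Neumann series. Define $S_{\ep} := \sum_{j=0}^{m} (-1)^j N_{\ep}^j$. Since everything commutes, the telescoping identity
\begin{equation*}
(\id + N_{\ep}) S_{\ep} = S_{\ep}(\id + N_{\ep}) = \id - (-1)^{m+1} N_{\ep}^{m+1} = \id
\end{equation*}
holds on $\Poly_n(\R)$. Hence $T_{\ep}$ is bijective, so it is an automorphism; it is linear by bilinearity of convolution and maps $\Poly_n(\R)$ into itself by Theorem \ref{thm:MollificationOfPolynomialsRealLine}. Moreover $T_{\ep}^{-1}(p) = \sum_{j=0}^m (-1)^j p_j$, which gives the first equalities in \eqref{eq:InverseOfConvolution} and \eqref{eq:InverseOfConvolutionOdd}. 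The middle expression in \eqref{eq:InverseOfConvolution} follows by substituting Lemma \ref{lem:ComputationOfSidePolynomials}.

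Third, the closed form is a combinatorial reindexing. Fix $0 \le i \le m$. The coefficient of $T_{\ep}^{i}(p)$ in $\sum_{j=0}^m\sum_{k=0}^j (-1)^{j+k}\binom{j}{k}T_{\ep}^{j-k}(p)$ collects the terms with $k = j-i$. It equals
\begin{equation*}
\sum_{j=i}^{m} (-1)^{2j-i}\binom{j}{i} = (-1)^i \sum_{j=i}^m \binom{j}{i} = (-1)^i\binom{m+1}{i+1},
\end{equation*}
using the hockey-stick identity. Writing $i = m - j'$ gives $(-1)^{m-j'}\binom{m+1}{m+1-j'} = (-1)^{m-j'}\binom{m+1}{j'}$, which is exactly the stated closed form in both parities.

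The main obstacle is bookkeeping rather than substance. The earlier corollaries are phrased for polynomials of exact degree $n \geq 2$, and the spaces $\Poly_n(\R)$ are nominally "degree $n$", excluding $0$. I will have to treat $\Poly_n(\R)$ as the vector space of degree $\le n$ and check that $N_{\ep}$ annihilates affine functions, so that $p_{m+1} = 0$. The degenerate case $n = 1$, where $m = 0$ and $T_{\ep} = \id$, is handled directly by the same formulas.
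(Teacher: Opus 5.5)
Your proof is correct, but it is organized differently from the paper's. The paper works degree by degree, in four separate steps:
\begin{itemize}
\item injectivity, by comparing coefficients in $p+p_1=q+q_1$ from the top down;
\item surjectivity, by an induction on $n$ within each parity class (a preimage of $p_1$ exists by hypothesis, and $p$ minus it is a preimage of $p$);
\item the formula $T_{\ep}^{-1}(p)=\sum_j(-1)^jp_j$, by another induction through $T_{\ep}^{-1}(p)=p-T_{\ep}^{-1}(p_1)$;
\item the closed form, by a third induction using Pascal's rule.
\end{itemize}
You instead isolate the one structural fact behind all of this: $N_{\ep}=T_{\ep}-\id$ is nilpotent, with $N_{\ep}^{\lfloor n/2\rfloor+1}=0$ on polynomials of degree at most $n$. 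The paper only records this afterwards, as $(\id-T_{\ep})^{n/2+1}\equiv 0$ in Corollary \ref{cor:EasyOfNumerical}. With it, bijectivity and the first formula come out of a single telescoping identity, both parities are handled at once, and the closed form is one hockey-stick computation.

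Your route is also more robust in two small ways. First, your explicit conventions ($\Poly_n(\R)$ read as degree $\le n$, and $p_{m+1}=0$) make the statement well posed where the paper's exact-degree convention is loose. Second, you only need that $N_{\ep}$ lowers degree by \emph{at least} two. The ``exactly'' you quote from Remark \ref{rmk:NotationSidePolynomial} rests on $c_2\neq 0$, which can fail for sign-changing kernels such as $\tfrac43 g_1-\tfrac13 g_2$, where $g_\sigma$ is the centred Gaussian density of variance $\sigma^2$. Your argument never uses it.

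Your viewpoint also makes the bridge to Theorem \ref{thm:GeneralizationToSchwartzFunctionsAndDistributions} transparent. The same hockey-stick computation shows that $T_{\ep,n}^{-1}=\sum_{j=0}^{n}(-N_{\ep})^j$ is exactly the truncated Neumann series for $(\id+N_{\ep})^{-1}$. On polynomials it terminates by nilpotency. In the settings of that theorem it converges because, after composing with $T_{\ep}$, its Fourier symbol is $1-(1-\widehat{\varphi_{\ep}})^{n+1}$ with $|1-\widehat{\varphi_{\ep}}|<1$. What the paper's inductive route offers in return is an explicit, coefficient-level construction of preimages, tied directly to Corollary \ref{cor:CorollaryPolynomials}.
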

\begin{proof}
    The fact that $T_{\ep}$ is linear follows directly from the definition
    of the convolution. We proceed to prove that it is bijective. Suppose 
    first that $n = 2$. We know by
    Corollary \ref{cor:PolynomialsMapping} that $T_{\ep}(p) = p + p_1$
    where $p_1 \in \Poly_{0}(\R)$, and therefore $T_{\ep}(p_1) = p_1$.
    If $p,q \in \Poly_{2}(\R)$ such that $p(x) = a_0+a_1x+a_2x^2$ and
    $q(x) = b_0+b_1x+b_2x^2$ for all $x \in \R$, then
    $p_1(x) = a_2\ep^2c_2$ and $q_1(x) = b_2\ep^2c_2$. Thus
    \begin{equation*}
        T_{\ep}(p) = T_{\ep}(q) \iff p + p_1 = q + q_1,
    \end{equation*}
    but since $p+p_1 \in \Poly_{2}(\R)$ and $q+q_1 \in \Poly_2(\R)$
    with the same leading coefficients as $p$ and $q$ by Corollary 
    \ref{cor:CorollaryPolynomials},
    and two polynomials of the same degree are equal if and only if all of 
    its coefficients
    are equal, it follows that 
    \begin{equation*}
        b_2 = a_2, b_1 = a_1, b_0+b_2\ep^2c_2 = a_0+a_2\ep^2c_2,
    \end{equation*}
    from which it follows that $b_0 = a_0$, that is $p = q$, proving
    that $T_{\ep}$ is injective. Clearly given $p \in \Poly_{2}(\R)$
    the polynomial $p-p_1$ satisfies that $T_{\ep}(p-p_1) = 
    T_{\ep}(p)-T_{\ep}(p_1) = p$
    proving that $T_{\ep}$ is an isomorphism of $\Poly_{2}(\R)$ onto itself.
    Therefore, we can know generalize this concepts by induction. 
    \begin{itemize}
        \item $T_{\ep}$ is injective. Let $n \in \N$. Suppose $T(p) = 
        T(q)$, then 
        $p+p_1 = q+q_1$. Since $p_1,q_1 \in \Poly_{n-2}(\R)$ this implies
        that $a_n = b_n$ and $a_{n-1}=b_{n-1}$. Now, note that by Remark
        \ref{rmk:NotationSidePolynomial} that
        \begin{equation*}
            a_{n-2} + \binom{n}{2}c_2\ep^2a_{n} = 
            b_{n-2}+\binom{n}{2}c_2\ep^2b_{n},
        \end{equation*}
        and therefore $a_{n-2} = b_{n-2}$. We can continue in this manner
        to prove that $a_{j} = b_{j}$ for all $j \in \{0,\dots,n\}$,
        that is, that $p = q$, proving the injection.
        
        \item $T_{\ep}$ is surjective. Suppose $n$ is even. If $n=2$, by the
        previous proof, for all $p \in \Poly_{2}(\R)$ there exists $q_{p} \in
        \Poly_{2}(\R)$ such that $T_{\ep}(q_{p}) = p$. Now let $p \in
        \Poly_{4}(\R)$, we know that $T_{\ep}(p) = p +p_1$ with $p_1 \in
        \Poly_{2}(\R)$, and so there exists $q_{p_1} \in \Poly_{2}(\R)$ such
        that $T_{\ep}(q_{p_1}) = p_1$. Therefore $T_{\ep}(p-q_{p_1}) =
        T_{\ep}(p)-T_{\ep}(q_{p_1}) = p + p_1 - p_1 = p$, and so $T_{\ep}$ is a
        surjection of $\Poly_{4}(\R)$ onto itself. We can now easily proceed by
        induction to prove it is a surjection for $n \in 2\N$. 
        
        Suppose now that $n = 3$. Then $T(p) = p+p_1$ with $p \in 
        \Poly_{1}(\R)$, which implies that $T(p_1) = p_1$. Therefore,
        $q = p-p_1$ satisfies $T_{\ep}(q) = T_{\ep}(p)-T_{\ep}(p_1) = p$
        proving the surjection for $n = 3$. Suppose it is true for $3\leq n
        \in 2\N +1 $. Given $p \in \Poly_{n+2}(\R)$ we know that
        $T_{\ep}(p) = p +p_1$ with $p_1 \in \Poly_{n}(\R)$, and so
        by the induction hypothesis there exists $q_1 \in \Poly_{n}(\R)$
        such that $T_{\ep}(q_1) = p_1$. Therefore, choosing 
        $q = p-q_1$ the surjection is proven.
    \end{itemize}
    We proceed now to prove Equation \eqref{eq:InverseOfConvolution}.
    For $n=2$ the formula reads $T_{\ep}^{-1}(p) = p -p_1$ which is true,
    since $T_{\ep}(T_{\ep}^{-1}(p)) = T_{\ep}(p)-T_{\ep}(p_1) = p+p_1-p_1 = 
    p$. Suppose then 
    it is true for $n \geq 2$. Given $p \in \Poly_{n+2}(\R)$ we know 
    that $T_{\ep}(p) = p + p_1$ where $p_1 \in \Poly_{n}(\R)$. If
    $q = p-T_{\ep}^{-1}(p_1)$, then $T_{\ep}(q) = T_{\ep}(p)-p_1 = 
    p+p_1-p_1 = p$. That is,
    \begin{equation*}
        q = T_{\ep}^{-1}(p) = p-T_{\ep}^{-1}(p_1) = p - 
        \sum_{j=0}^{n/2}(-1)^{j}p_{j+1} = p_0 + 
        \sum_{j=1}^{n/2+1}(-1)^{j}p_{j} = \sum_{j=0}^{(n+2)/2}(-1)^{j}p_{j},
    \end{equation*}
    proving the first equality of Equation \eqref{eq:InverseOfConvolution}.
    The second equality comes from Lemma 
    \ref{lem:ComputationOfSidePolynomials}. Finally, the third 
    equality can we proven by induction. For $n=2$ we have that 
    $T_{\ep}^{-1}(p) = 2p - T_{\ep}(p)$ for both formulas. Suppose then
    it holds for $n \geq 2$ with $n$ even. Then for $n+2$
    \begin{align*}
        T_{\ep}^{-1}(p) &= 
        \sum_{j=0}^{(n+2)/2}\sum_{k=0}^{j}(-1)^{j+k}\binom{j}{k}T_{\ep}^{j-k}(p)
        \\
        &=\sum_{j=0}^{n/2}\sum_{k=0}^{j}(-1)^{j+k}\binom{j}{k}T_{\ep}^{j-k}(p)
        +\sum_{k=0}^{n/2+1}(-1)^{n/2+1+k}\binom{n/2+1}{k}T_{\ep}^{n/2+1-k}(p)
        \\
        &= 
        \sum_{j=0}^{n/2}(-1)^{\frac{n}{2}-j}\binom{\frac{n}{2}+1}{j}T_{\ep}^{\frac{n}{2}-j}(p)
        +\sum_{k=1}^{\frac{n}{2}+1}(-1)^{\frac{n}{2}+1-k}\binom{\frac{n}{2}+1}{k}T_{\ep}^{\frac{n}{2}+1-k}(p)
        +(-1)^{\frac{n}{2}+1}T_{\ep}^{\frac{n}{2}+1}(p) \\
        &= 
        \sum_{j=0}^{n/2}(-1)^{n/2-j}\left[\binom{\frac{n}{2}+1}{j}+\binom{\frac{n}{2}+1}{j+1}\right]T_{\ep}^{\frac{n}{2}-j}(p)
        +(-1)^{\frac{n}{2}+1}\binom{\frac{n+2}{2}+1}{0}T_{\ep}^{\frac{n}{2}+1}(p)\\
        &= 
        \sum_{j=0}^{(n+2)/2}(-1)^{\frac{n+2}{2}-j}\binom{\frac{n+2}{2}+1}{j}T_{\ep}^{\frac{n+2}{2}-j}(p).
    \end{align*}
    proving Equation \eqref{eq:InverseOfConvolution}.
    The proof for the case where 
    $n$ odd,
    i.e., Equation \eqref{eq:InverseOfConvolutionOdd},
    is identical to the even case.
\end{proof}
\begin{corollary}\label{cor:EasyOfNumerical}
    Equation \eqref{eq:InverseOfConvolution} can also be computed as
    \begin{equation*}
        T_{\ep}^{-1} = 
        \sum_{j=0}^{n/2}(-1)^{j}\binom{n/2+1}{j+1}T_{\ep}^{j}
    \end{equation*}
    which in turn implies that in $\Poly_{n}(\R)$, with even $n$,
    \begin{equation*}
        T_{\ep}^{-1}\circ T_{\ep} = \id = 
        -\sum_{j=1}^{n/2+1}\binom{n/2+1}{j}(-T_{\ep})^{j} = \id - (\id - 
        T_{\ep})^{n/2+1},
    \end{equation*}
    that is $(\id - T_{\ep})^{n/2+1} \equiv 0$ on $\Poly_n(\R)$.
\end{corollary}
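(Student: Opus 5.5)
We start from the third expression in Equation \eqref{eq:InverseOfConvolution} and rewrite it as a polynomial in $T_{\ep}$. Set $m := n/2$. Reindex the third expression with $i := m - j$, so that $i$ runs from $0$ to $m$. This gives
\begin{equation*}
T_{\ep}^{-1} = \sum_{i=0}^{m}(-1)^{i}\binom{m+1}{m-i}T_{\ep}^{i}.
\end{equation*}
By the symmetry $\binom{m+1}{m-i} = \binom{m+1}{i+1}$, this is exactly the first claimed formula. This step only uses Theorem \ref{thm:ConvolutionIsAnIsomorphismInPolynomials} and elementary binomial identities.

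Next we compose with $T_{\ep}$. Since all the operators involved are powers of the same linear map, they commute. Composing the first formula with $T_{\ep}$ yields
\begin{equation*}
T_{\ep}^{-1}\circ T_{\ep} = \sum_{i=0}^{m}(-1)^{i}\binom{m+1}{i+1}T_{\ep}^{i+1}.
\end{equation*}
Shift the index to $j = i+1$, which runs from $1$ to $m+1$. The sign becomes $(-1)^{j-1} = -(-1)^j$, so the sum equals $-\sum_{j=1}^{m+1}\binom{m+1}{j}(-T_{\ep})^{j}$, which is the middle expression of the claim.

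We then apply the binomial theorem in the commutative algebra generated by $\id$ and $T_{\ep}$. It gives
\begin{equation*}
(\id - T_{\ep})^{m+1} = \id + \sum_{j=1}^{m+1}\binom{m+1}{j}(-T_{\ep})^{j}.
\end{equation*}
Hence the middle expression equals $\id - (\id - T_{\ep})^{m+1}$.

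Finally, Theorem \ref{thm:ConvolutionIsAnIsomorphismInPolynomials} gives $T_{\ep}^{-1}\circ T_{\ep} = \id$ on $\Poly_n(\R)$. Comparing with the previous display, $(\id - T_{\ep})^{m+1} \equiv 0$ there.

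As an independent check, we can derive the nilpotency directly from Lemma \ref{lem:ComputationOfSidePolynomials}. That lemma gives $p_j = (T_{\ep}-\id)^j p$, since its sum is exactly the binomial expansion of $(T_{\ep}-\id)^j$. By Corollary \ref{cor:PolynomialsMapping}, each application of $T_{\ep}-\id$ lowers the degree by two until a constant or affine polynomial is reached, and such a polynomial is then annihilated. Hence $p_{m+1} = 0$ for $\deg p \le n$.

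The main subtlety, and the only step needing care, is the spaces. $\Poly_n(\R)$ in the paper means polynomials of degree exactly $n$, so it is not literally a vector space. The identity should therefore be read on polynomials of degree at most $n$. This reading is consistent, because every lower-degree polynomial satisfies the lower-order nilpotency and hence also the order-$(m+1)$ one. The remaining steps are index bookkeeping.
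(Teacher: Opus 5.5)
Your proof is correct and follows the paper's route. You reindex with $i = n/2 - j$ and use $\binom{m+1}{m-i} = \binom{m+1}{i+1}$ to get the first formula, then apply the binomial theorem to the commuting operators $\id$ and $T_{\ep}$ to obtain $\id - (\id - T_{\ep})^{n/2+1}$, which yields the nilpotency. Two additions go slightly beyond what the paper spells out, and both are sound: your direct check via Lemma \ref{lem:ComputationOfSidePolynomials} ($p_j = (T_{\ep}-\id)^j p$, with the degree dropping by two until the polynomial is annihilated), and your remark that the identities should be read on polynomials of degree at most $n$.
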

\begin{proof}
    The first result follows from a change of indexes in the
    sum. The second follows noting that, if $f : V \to V$ is a linear 
    mapping, where $V$ is any vector space,
    then, by the binomial theorem for linear functions,
    $\sum_{k=0}^{n}(-1)^{k}\binom{n}{k}f^k = (\id -f)^{n}$ where $\id : V \to V$
    is the identity on $V$.
\end{proof}
Note from Corollary \ref{cor:EasyOfNumerical} it is only needed
to compute $n/2$ convolutions numerically speaking, because each time
the index of the sum advances, it is only needed to convolve the 
previously convolved function.
\begin{remark}
    Theorem \ref{thm:ConvolutionIsAnIsomorphismInPolynomials} states that, for
    polynomials, convolution is invertible as long as the kernel is an even
    (Schwartz) function. Note that convolution is, practically speaking, the
    filtering of the function with a smooth filter in $\SM(\R)$. Note that since
    $\SM(\R) \subset \Sw(\R) \subset L^2(\R)$ the filter has finite energy. What
    the Theorem states is that, even after having (for example) low-filtered a
    polynomial via convolution with an even convolution kernel (e.g., Gaussian
    Kernel, or the compactly supported convolution kernel $x \mapsto \varphi(x)
    = \ind_{(-1,1)}\exp(-1/(1-x^2))$), we can recover the original function
    uniquely.  This powerful fact can also be numerically implemented easily by
    Equations \eqref{eq:InverseOfConvolution} and
    \eqref{eq:InverseOfConvolutionOdd}. See Section \ref{sec:Applications} for
    an application related to this remark.
\end{remark}
\begin{remark}\label{rmk:PolynomialDecomposition}
    Suppose $n$ is even. Note that $T_{\ep}(T^{-1}_{\ep}(p)) = p$ and $T_{\ep}$
    is linear, which implies
    \begin{equation*}
        p =\sum_{j=0}^{n/2}(-1)^{n/2-j}\binom{n/2+1}{j}T_{\ep}^{n/2+1-j}(p),
    \end{equation*}
    that is, any polynomial can be expressed as a linear combination of its
    convolutions with even Schwartz convolution  kernels, no matter the $\ep>0$
    nor the (even) convolution kernel. This is an interesting result: the sum of
    the convolved versions of the polynomial generate the same polynomial. For
    example, for $n=4$,
    \begin{equation*}
        p = T_{\ep}^{3}(p)-3T_{\ep}^2(p) + 3T_{\ep}(p) = p * 
        (3\varphi_{\ep} - 3\varphi_{\ep}*\varphi_{\ep} + 
        \varphi_{\ep}*\varphi_{\ep}*\varphi_{\ep}).
    \end{equation*}
    An in general, for even $n$ we have that
    $\sum_{j=0}^{n/2}(-1)^{n/2-j}\binom{n/2+1}{j}T_{\ep}^{n/2-j}(\varphi_{\ep})$
    is an identity operator for the convolution of polynomials in
    $\Poly_{n}(\R)$ with even Schwartz kernels. Clearly, the identity operator
    is not a polynomial, and thus they do not form a group.  Moreover, note that
    being $T_{\ep}$ an automorphism in $\Poly_{n}(\R)$ then it cannot hold that
    $T_{\ep}(p)$ is identically zero for any $p \in \Poly_{n}(\R)$ except for $p
    \equiv 0$. This is congruent with information theory, we can \textbf{always}
    recover the original polynomial after convolution because there is
    \textbf{always} information about it after the convolution, i.e.,
    $T_{\ep}(p) \not \equiv 0$ for any $p \not\equiv 0$. 
\end{remark}

We end this section by noting the following. There is nothing special
about $\varphi$ being a Schwartz function, since as it can be seen
in the proofs of the previous results, the rapidly decreasing property
of $\varphi$ is \textit{only} needed so that the convolution is well posed.
Therefore, any other even function for which $p * \varphi$, as
well as its recursive convolutions, make sense, is equally valid. Nevertheless,
in the following section we will show that the requirement for $\varphi$
to be a Schwartz function is essential to carry the generalization to more 
function spaces.

\subsection{The extension to a bigger space of functions: the deconvolution
    of arbitrary functions}
The natural question now is to ask if, for a given function $f$, it holds (under
certain assumptions about $f$) that, by passing to the limit in Equation
\eqref{eq:InverseOfConvolution}, we can recover $f$ from $T_{\ep}(f)$. That is,
if $T_{\ep,n}^{-1}$ is the inverse of the automorphism $T_{\ep}$ in
$\Poly_{n}(\R)$ for some even $\varphi \in \Sw(\R)$ (see Theorem
\ref{thm:ConvolutionIsAnIsomorphismInPolynomials}) does it hold that
\begin{equation*}
    \lim_{n\to\infty}T_{\ep,n}^{-1}(T_{\ep}(f)) = f,
\end{equation*}
in some sense of convergence? The answer is affirmative under some
conditions on the Fourier transform of $\varphi \in \Sw$ and different
assumptions for $f$. We prove these results in the next theorem, in which the
assumption that $\varphi$ is even is not always required. As mention in the
introduction, we will denote $\Fourier{f}$ or $\widehat{f}$ the Fourier
transform of a function $f$ whenever it exists, and as $\mathcal{F}$ the Fourier
transform as an operator. For its properties  refer to
\cite{grubb2008distributions} and \cite{grafakos2008classical}.
\begin{theorem}\label{thm:GeneralizationToSchwartzFunctionsAndDistributions}
    Let $\varphi \in \Sw(\R)$ and let $\ep > 0$. Suppose that
    for all $\xi \in \R$ it holds that $0 < \widehat{\varphi_{\ep}}(\xi)  
    < 
    2$. Let the operator $f \mapsto T_{\ep}(f) = f * \varphi_{\ep}$, when 
    it makes sense for 
    $f$,
    and for  $n \in \N$ let $T_{\ep,n}^{-1}$
    be the operator
    \begin{equation}\label{eq:InverseOfConvolutionArbitrary}
        f \mapsto T_{\ep,n}^{-1}(f) = 
        \sum_{k=0}^{n}(-1)^{k}\binom{n+1}{k+1}T_{\ep}^{k}(f),
    \end{equation}
    when it makes sense for $f$, where $T_{\ep}^{k}$ is the $k$-th folding
    composition of $T_{\ep}$. That is, the convolution applied repetitively $k$
    times. If $f \in \Sw'(\R)$ denote by $f * \varphi_{\ep}$ the tempered
    distribution $\dotProduct{f*\varphi_{\ep}}{\psi} =
    \dotProduct{f}{\check{\varphi_{\ep}}*\psi}$ for all $\psi \in \Sw(\R)$,
    where $\check{\varphi_{\ep}}(x) = \varphi_{\ep}(-x)$ for all $x \in \R$.
    Extend the definition of the operators $T_{\ep}$ and $T_{\ep,n}^{-1}$ to
    $\Sw'$. Then the following statements are true
    \begin{enumerate}
        \item If $f \in L^2(\R)$ then 
        $\lim_{n\to\infty}T_{\ep,n}^{-1}(T_{\ep}(f)) = f$ in 
        $L^2(\R)$.
        \item  If $f \in L^1(\R)$ with $\widehat{f} \in L^1(\R)$ then 
        $\lim_{n\to\infty}T_{\ep,n}^{-1}(T_{\ep}(f)) = f$ 
        pointwise.
        \item Let $\one \in \Sw'$ be the one tempered distribution (i.e., 
        $\dotProduct{\one}{\psi} = \int_{\R}\psi$ for all $\psi \in 
        \Sw(\R))$, let $\delta \in \Sw'$ be the Dirac delta distribution,
        let $\vp \in \Sw'$ be the Cauchy Principal Value distribution,
        and for $y \in \R$ 
        let $e^{2\pi iy\id }$ the distribution that arises from the 
        (slowly increasing) function $x \mapsto e^{2\pi ixy}$.  Then,
        for any of these distributions $u \in \Sw'$ it holds that 
        $T_{\ep,n}^{-1}(T_{\ep}(u)) \to u$ as $n \to \infty$ in the 
        distributional topology. The same holds for any linear combination
        or/and translations of these distributions.
        \item Suppose now that $\varphi$ is the Gaussian Kernel, i.e., 
        $\varphi(x) = \frac{1}{\sqrt{4\pi}}e^{-(x/2)^2}$ for all $x \in \R$.
        Then, given $f \in \Sw(\R)$ it holds that 
        $T_{\ep,n}^{-1}(T_{\ep}(f)) \to f$ as $n \to \infty$ in the 
        Schwartz topology, and
        for any $f \in \Sw'(\R)$, it holds $T_{\ep,n}^{-1}(T_{\ep}(f)) \to 
        f$
        as $n \to \infty$ in the tempered distribution sense.
    \end{enumerate}
\end{theorem}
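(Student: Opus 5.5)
The plan is to work on the Fourier side, where everything reduces to a scalar identity. Write $m(\xi) := \widehat{\varphi_{\ep}}(\xi)$. By Corollary \ref{cor:EasyOfNumerical} (the binomial identity holds for any linear operator), $T_{\ep,n}^{-1}\circ T_{\ep} = \id - (\id - T_{\ep})^{n+1}$. Hence the Fourier multiplier of $T_{\ep,n}^{-1}T_{\ep}$ is
\begin{equation*}
    \sum_{k=0}^{n}(-1)^k\binom{n+1}{k+1}m^{k+1} = 1 - (1-m(\xi))^{n+1}.
\end{equation*}
The hypothesis $0<m<2$ gives $|1-m(\xi)|<1$ for every $\xi$. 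Therefore the error multiplier $r_n(\xi) := (1-m(\xi))^{n+1}$ satisfies $|r_n|\le 1$ and $r_n\to 0$ pointwise. Every item amounts to showing $\mathcal{F}^{-1}(r_n \widehat{u}) \to 0$ in the appropriate topology.

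\textbf{Items 1 and 2.} For $f\in L^2$, Plancherel gives $\|T_{\ep,n}^{-1}T_{\ep}f - f\|_2 = \|r_n\widehat f\|_2$. This tends to $0$ by LDCT with dominating function $|\widehat f|^2$. For $f\in L^1$ with $\widehat f\in L^1$, Fourier inversion holds at every point, since $f$ agrees with a continuous function (pointwise we interpret $f$ as its continuous representative). Each $T_{\ep}^k f$ is in $L^1$ with transform $m^k\widehat f\in L^1$. So
\begin{equation*}
    T_{\ep,n}^{-1}T_{\ep}f(x) - f(x) = -\int r_n(\xi)\widehat f(\xi)e^{2\pi i x\xi}d\xi,
\end{equation*}
and this vanishes in the limit by LDCT with dominating function $|\widehat f|$.

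\textbf{Item 3.} I would compute $\widehat u$ for each distribution:
\begin{equation*}
    \widehat{\one}=\delta,\qquad \widehat\delta=\one,\qquad \widehat{e^{2\pi i y\id}}=\delta_y,\qquad \widehat{\vp}=-i\pi\sign.
\end{equation*}
For $\psi\in\Sw$ I then test $\langle T_{\ep,n}^{-1}T_{\ep}u - u,\psi\rangle = -\langle r_n\widehat u, \mathcal{F}^{-1}\psi\rangle$, which is legitimate because $m$ is Schwartz, so $r_n$ is a smooth multiplier of polynomial growth.
\begin{itemize}
\item For $\delta_y$ the pairing is $r_n(y)\,\mathcal{F}^{-1}\psi(y)\to 0$. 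In particular this covers $\one$ at $y=0$.
\item For $\one$ and $\sign$ the pairing is an integral $\int r_n \cdot(\text{bounded})\cdot \mathcal{F}^{-1}\psi$, which tends to $0$ by LDCT.
\end{itemize}
Translation by $a$ multiplies $\widehat u$ by $e^{-2\pi i a\xi}$, and linear combinations are handled by linearity, so both extensions follow immediately.

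\textbf{Item 4.} This is the main obstacle. For the Gaussian, $m(\xi)=e^{-4\pi^2\ep^2\xi^2}\in(0,1]$, so $0\le 1-m<1$ with equality in $1-m<1$ only at $\xi=0$, and $1-m\to 1$ as $|\xi|\to\infty$. Convergence in $\Sw$ requires
\begin{equation*}
    \sup_\xi|\xi^\alpha\partial^\beta(r_n\widehat f)|\to 0,
\end{equation*}
which needs control of the derivatives of $r_n=(1-m)^{n+1}$. Each derivative pulls down factors $(n+1)n\cdots(1-m)^{n+1-j}\,(\text{derivatives of } m)$. Near infinity, $m$ and its derivatives decay like $e^{-c\xi^2}$ times polynomials. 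So I plan to bound $|\partial^j r_n|\le C_j n^j(1-m)^{n+1-j}\,e^{-c\xi^2}(1+|\xi|)^{N}$ for $j\ge1$. The factor $n^j(1-m)^{n}$ must be controlled against $e^{-c\xi^2}$: where $m\ge n^{-1/2}$ the power kills $n^j$, while where $m< n^{-1/2}$ the factor $e^{-c\xi^2}\lesssim m^{c'}$ is small. Choosing the split via $m\sim n^{-\theta}$ should show these terms vanish uniformly when multiplied by $\widehat f$ and its derivatives, whose Schwartz decay absorbs the polynomial weights. The $j=0$ term is $r_n\,\partial^\beta\widehat f$, whose sup tends to $0$: on $|\xi|\le R$, $r_n\to0$ uniformly away from $0$, and near $0$ we have $r_n\le(4\pi^2\ep^2\xi^2)^{n+1}$. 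Since $\mathcal F$ is a topological isomorphism of $\Sw$, this gives convergence in $\Sw$.

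For $f\in\Sw'$, I dualize: $\langle T_{\ep,n}^{-1}T_{\ep}f - f,\psi\rangle = \langle f, (T_{\ep,n}^{-1}T_{\ep}-\id)^{\vee}\psi\rangle$, where the transposed operator is the same expression built with $\check\varphi_\ep=\varphi_\ep$. By the Schwartz case this argument tends to $0$ in $\Sw$, and continuity of $f$ finishes the proof.
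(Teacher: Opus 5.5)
Items 1--3 follow the paper and are fine. So does the passage from $\Sw(\R)$ to $\Sw'(\R)$ in item 4; your transposition using $\check{\varphi_{\ep}}=\varphi_{\ep}$ is a slightly more direct version of the paper's detour through $\widehat{\psi}$. The gap is in the Schwartz-topology estimate of item 4, which is the real content of the theorem.

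Your bound $|\partial^j r_n|\le C_j n^j(1-m)^{n+1-j}e^{-c\xi^2}(1+|\xi|)^N$ is true, but it discards exactly what is needed. By Fa\`a di Bruno, the term of $\partial^j(1-m)^{n+1}$ indexed by $(k_1,\dots,k_j)$, with $\bar k=\sum_i k_i$, carries the factor $\frac{(n+1)!}{(n+1-\bar k)!}\sim n^{\bar k}$ together with $\bar k$ derivatives of $m$. It therefore carries $m^{\bar k}$ times a polynomial, not a single factor $m$. With only one factor, consider $|\xi|\approx\sqrt{\log n/c}$, where $m\approx 1/n$. There your right-hand side is of size $n^{j-1}$, while $|\xi^\alpha\partial^{\beta-j}\widehat{f}|$ may decay only like $n^{-o(1)}$ (take $\widehat{f}(\xi)=e^{-\sqrt{1+\xi^2}}$). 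So for $j\ge 2$ the bound cannot give convergence.

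The threshold split $m\sim n^{-\theta}$ also fails, even with the correct power $m^{\bar k}$ and already for $j=1$:
\begin{itemize}
\item On $\{m\ge n^{-\theta}\}$, the factor $(1-m)^{n}\le e^{-n^{1-\theta}}$ beats $n^{\bar k}$ only if $\theta<1$.
\item On $\{m<n^{-\theta}\}$, using only the smallness of $m$ gives $n^{\bar k}m^{\bar k}\le n^{\bar k(1-\theta)}$. Against a weight that is only $n^{-o(1)}$, this does not tend to zero unless $\theta\ge 1$.
\end{itemize}
No single $\theta$ satisfies both conditions.

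The missing idea is to use the two factors jointly rather than in separate regions. Since $1-m\le e^{-m}$,
\begin{equation*}
\frac{(n+1)!}{(n+1-\bar k)!}\,m^{\bar k}(1-m)^{n+1-\bar k}\le (n+1)^{\bar k}m^{\bar k}e^{-(n+1-\bar k)m}\le \Bigl(\frac{2\bar k}{e}\Bigr)^{\bar k}
\end{equation*}
uniformly in $\xi$ and in $n\ge 2\bar k$. This is the paper's estimate of $g_x(n)$ via $v_x(n)$, maximized at $n=\bar k e^{x^2}$. With this $n$-independent bound you split at a fixed $R$:
\begin{itemize}
\item On $|\xi|>R$, the term is at most a constant times $\sup_{|\xi|>R}(1+|\xi|)^{N'}|\partial^{\beta-j}\widehat{f}(\xi)|$, which is small for large $R$ by the Schwartz decay of $\widehat{f}$.
\item On $|\xi|\le R$, one has $1-m\le 1-e^{-cR^2}<1$, so the geometric factor kills $(n+1)^{\bar k}$.
\end{itemize}
Without this uniform bound, your sketch does not establish $\sup_\xi|\xi^\alpha\partial^\beta(r_n\widehat{f})|\to 0$.
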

\begin{proof}
    We prove each statement separately.
    \begin{enumerate}
        \item First note that since $f \in L^2(\R)$ and $\varphi_{\ep} \in 
        \Sw(\R)$ then $T_{\ep}(f) = f * \varphi_{\ep} \in L^2(\R)$ by Young's
        inequality for convolution \cite{folland1999real}, which implies that
        $T_{\ep,n}^{-1}(T_{\ep}(f)) \in L^2(\R)$ and so the operator makes
        sense. By the Fourier transform properties, it holds that
        $\Fourier{T_{\ep,n}^{-1}(T_{\ep}(f))} \in L^2(\R)$ for all $n \in \N$
        and all $f \in L^2(\R)$. Moreover $\widehat{\varphi_{\ep}} \in \Sw(\R)$,
        and since $\Fourier{T_{\ep}(f)} = \widehat{f} \widehat{\varphi_{\ep}}$,
        by induction it is easily seen that $\Fourier{T^k_{\ep}(f)} =
        \widehat{f}(\widehat{\varphi_{\ep}}\cdots\widehat{\varphi_{\ep}}) =
        \widehat{f}\widehat{\varphi_{\ep}}^k$ for all $k \in \N$. Thus, by the
        linearity of the Fourier transform
        \begin{align*}
            \Fourier{T_{\ep,n}^{-1}(T_{\ep}(f))} =
            \sum_{k=0}^{n}(-1)^{k}\binom{n+1}{k+1}\widehat{f}\widehat{\varphi_{\ep}}^{k+1}
            = \hat f
            \sum_{k=0}^{n}(-1)^{k}\binom{n+1}{k+1}\widehat{\varphi_{\ep}}^{k+1}.
        \end{align*}
        Moreover,
        \begin{align*}
            \sum_{k=0}^{n}(-1)^{k}\binom{n+1}{k+1}\widehat{\varphi_{\ep}}^{k+1}
            =
            \sum_{k=1}^{n+1}(-1)^{k-1}\binom{n+1}{k}\widehat{\varphi_{\ep}}^{k}
            = -
            \sum_{k=1}^{n+1}\binom{n+1}{k}\left(-\widehat{\varphi_{\ep}}\right)^{k},
        \end{align*}
        but by the binomial theorem $(1-x)^{n+1} = 
        \sum_{k=0}^{n+1}\binom{n+1}{k}(-x)^{k}$, and therefore
        \begin{align*}
            -\sum_{k=1}^{n+1}\binom{n+1}{k}\left(-\widehat{\varphi_{\ep}}\right)^{k}
            = 1-(1-\widehat{\varphi_{\ep}})^{n+1}.
        \end{align*}
        Since $\widehat{\varphi_{\ep}} \in \Sw$ then so it does
        $\widehat{\varphi_{\ep}}^{k}$ for all $k \in \N$, and so
        $1-(1-\widehat{\varphi})^{n+1} \in \Sw$ for all $n \in \N$, being $\Sw$
        a vector space. Now we claim that
        $\widehat{f}(1-(1-\widehat{\varphi_{\ep}})^n) \to \widehat{f}$ in
        $L^2(\R)$. Indeed, note that $|\widehat{f}|^2 \in L^1(\R)$, and also
        \begin{equation*}
            |\widehat{f}(1-(1-\widehat{\varphi_{\ep}})^n) - \widehat{f}|^2 = 
            |\widehat{f}|^2|1-\widehat{\varphi_{\ep}}|^{2n} < 
            |\widehat{f}|^2, \quad 
            \forall n \in \N,
        \end{equation*}
        due to the fact that $ 0 < \widehat{\varphi_{\ep}} < 2$. It also holds
        that\footnote{If there exists some $x \in \R$ in which
            $\widehat{\varphi_{\ep}}(x)  =0 $, then we cannot conclude pointwise
            convergence. This subtle assumption has important consequences.
            Indeed, if the frequency content has been removed, it cannot be
            recovered uniquely, since there could be an infinite amount of
            functions for which that frequency is also zero. However, we could
            also consider pointwise almost everywhere convergence, for which the
            proof is identical.} $(1-\widehat{\varphi_{\ep}})^n \to 0$ pointwise
            as $n \to \infty$. Thus, applying the Lebesgue's Dominated
            Convergence Theorem (LDCT)
        \begin{equation*}
            \lim_{n\to\infty}\int_{\R}|\widehat{f}(x)(1-(1-\widehat{\varphi_{\ep}}(x))^n
            - \widehat{f}(x)|^2dx = 
            \lim_{n\to\infty}\int_{\R}|\widehat{f}(x)|^2|1-\widehat{\varphi_{\ep}}(x)|^{2n}dx
            = 0.
        \end{equation*}
        Thus, $\lim_{n\to\infty}\Fourier{T_{\ep,n}^{-1}(T_{\ep}(f))} = 
        \widehat{f}$
        in $L^2(\R)$. Since the Fourier transform is an isometric 
        isomorphism in $L^2(\R)$, it holds that
        \begin{align*}
            f = \mathcal{F}^{-1}(\widehat{f}) &= 
            \mathcal{F}^{-1}\left({\lim_{n\to\infty}\Fourier{T_{\ep,n}^{-1}(T_{\ep}(f))}}\right)
            = 
            \lim_{n\to\infty}\mathcal{F}^{-1}\Fourier{T_{\ep,n}^{-1}(T_{\ep}(f))}
            = \lim_{n\to\infty}T_{\ep,n}^{-1}(T_{\ep}(f)),
        \end{align*}
        in $L^2(\R)$. This proves the statement.
        
        \item We know by the previous statement that
        \begin{equation*}
            \Fourier{T_{\ep,n}^{-1}(T_{\ep}(f))} = 
            \left(1-(1-\widehat{\varphi_{\ep}})^{n+1}\right)\widehat{f}, 
            \quad 
            \forall n \in \N.
        \end{equation*}
        Again note that by Young's inequality, $T_{\ep}(f) \in L^1(\R)$, for $f
        \in L^1(\R)$, and so $T_{\ep,n}^{-1}(T_{\ep}(f)) \in L^1(\R)$ for all $n
        \in \N$. We have that
        $\left(1-(1-\widehat{\varphi_{\ep}})^{n+1}\right)\widehat{f} \to
        \widehat{f}$ pointwise as $n \to \infty$. Moreover
        $|\left(1-(1-\widehat{\varphi_{\ep}})^{n+1}\right)\widehat{f}| \leq
        2|\widehat{f}|$, and being $\widehat{f} \in L^1(\R)$, we can apply the
        LDCT
        \begin{align*}
            \lim_{n\to\infty}T_{\ep,n}^{-1}(T_{\ep}(f))(x) &= 
            \lim_{n\to\infty}\mathcal{F}^{-1}\Fourier{T_{\ep,n}^{-1}(T_{\ep}(f))}(x)
            \\
            &= 
            \lim_{n\to\infty}\int_{\R}\Fourier{T_{\ep,n}^{-1}(T_{\ep}(f))}(\xi)e^{2\pi
                i 
                \xi x}d\xi \\
            &= 
            \lim_{n\to\infty}
            \int_{\R}\left(1-(1-\widehat{\varphi_{\ep}}(\xi))^{n+1}\right)\widehat{f}(\xi)e^{2\pi
            i \xi x}d\xi \\
            &= \int_{\R}\widehat{f}(\xi)e^{2\pi i\xi x}d\xi  \\
            &= \mathcal{F}^{-1}(\widehat{f})(x) = f(x). 
        \end{align*}
        Clearly, we have an almost everywhere results if the pointwise converge
        is also almost everywhere. 
        \item Let $u = \one$. It is known that $\widehat{\one} = \delta$.
        Therefore, $\widehat{T_{\ep,n}^{-1}(T_{\ep}(\one))} = 
        \delta\cdot (1-(1-\widehat{\varphi_{\ep}})^{n+1})$. Thus,
        given $\psi \in \Sw(\R)$,
        \begin{align*}
            \dotProduct{\widehat{T_{\ep,n}^{-1}(T_{\ep}(\one))}}{\psi}
            &= \dotProduct{\delta\cdot 
                (1-(1-\widehat{\varphi_{\ep}})^{n+1})}{\psi} 
            =
            \dotProduct{\delta}{(1-(1-\widehat{\varphi_{\ep}})^{n+1})\psi}
            = (1-(1-\widehat{\varphi_{\ep}})^{n+1})(0)\psi(0).
        \end{align*}
        Since $(1-(1-\widehat{\varphi_{\ep}})^{n+1}) \to 1$ pointwise as $n 
        \to \infty$,
        \begin{equation*}
            \lim_{n\to\infty}\dotProduct{\widehat{T_{\ep,n}^{-1}(T_{\ep}(\one))}}{\psi}
            = \dotProduct{\delta}{\psi},
        \end{equation*}
        and so $\widehat{T_{\ep,n}^{-1}(T_{\ep}(\one))} \to \delta =
        \widehat{\one}$ as $n \to \infty$ in the weak-* topology of $\Sw'$,
        i.e., in the sense of distributions. Since the Fourier transform is a
        homeomorphism in $\Sw'$ it follows that $T_{\ep,n}^{-1}(T_{\ep}(\one))
        \to \one$ as $n\to \infty$ in the sense of distributions. The case for
        $u = \delta$ follows by noting that
        \begin{equation*}
            \dotProduct{\widehat{T_{\ep,n}^{-1}(T_{\ep}(\delta))}}{\psi} = 
            \dotProduct{\one}{(1-(1-\widehat{\varphi_{\ep}})^{n+1})\psi}
            = \int_{\R}(1-(1-\widehat{\varphi_{\ep}})^{n+1})\psi, \quad \forall 
            \psi 
            \in \Sw(\R),
        \end{equation*}
        and then applying the LDCT. For the principal value distribution,
        $\widehat{\vp} = -i\pi \sign$ as a distribution, then, given $ \psi \in
        \Sw(\R)$,
        \begin{align*}
            \dotProduct{\widehat{T_{\ep,n}^{-1}(T_{\ep}(\vp))}}{\psi}
            &=\dotProduct{-i\pi 
                \sign}{(1-(1-\widehat{\varphi_{\ep}})^{n+1})\psi}
            = 
            -i\pi\int_{\R}\sign(\xi)(1-(1-\widehat{\varphi_{\ep}}(\xi))^{n+1})\psi(\xi)d\xi,
        \end{align*}
        and the result follows by applying again the LDCT. Since
        $\widehat{e^{2\pi i\id y}}= \delta_{y}$, i.e., the Dirac delta centered
        at $y \in \R$ the results follows from the previous proof. The
        convergence of the linear combination and translations of these
        distributions follow immediately.
        
        \item First we prove that for the claims to hold, it is sufficient 
        to prove that for any $\psi \in \Sw(\R)$ then $\psi (1-
        \widehat{\varphi_{\ep}})^n \to 0$ as $n \to \infty$ in the Schwartz
        topology. Indeed, let $f \in \Sw(\R)$ and suppose it holds that
        $\widehat{f}(1-\widehat{\varphi_{\ep}})^n \to 0$ in the Schwartz
        topology as $n \to \infty$. Since $\widehat{f} \in \Sw(\R)$ and applying
        the same steps as in the previous proofs,
        \begin{equation*}
            \widehat{T_{\ep,n}^{-1}(T_{\ep}(f))} = 
            \hat{f}\left(1-(1-\widehat{\varphi_{\ep}})^{n+1}\right) \to 
            \hat{f}, \quad n \to \infty
        \end{equation*}
        in the Schwartz topology by assumption. Since the Fourier transform is a
        homeomorphism on $\Sw(\R)$, then 
        \begin{align*}
            f = \mathcal{F}^{-1}(\widehat{f}) &=  
            \mathcal{F}^{-1}\left({\lim_{n\to\infty}\Fourier{T_{\ep,n}^{-1}(T_{\ep}(f))}}\right)
                                           =  
            \lim_{n\to\infty}\mathcal{F}^{-1}\Fourier{T_{\ep,n}^{-1}(T_{\ep}(f))}
            = \lim_{n\to\infty}T_{\ep,n}^{-1}(T_{\ep}(f)),
        \end{align*}
        in the Schwartz topology. If $f \in \Sw'(\R)$, being $\varphi$ even,
        \begin{equation*}
            \dotProduct{\widehat{T_{\ep,n}^{-1}(T_{\ep}(f))}}{\psi} = 
            \dotProduct{T_{\ep,n}^{-1}(T_{\ep}(f))}{\widehat{\psi}}
            = \dotProduct{f}{T_{\ep,n}^{-1}(T_{\ep}(\widehat{\psi}))}, 
            \quad \forall \psi 
            \in \Sw(\R).
        \end{equation*}
        Since $T_{\ep,n}^{-1}(T_{\ep}(\widehat{\psi})) \to \widehat{\psi}$ as $n
        \to \infty$ in the Schwartz topology, and tempered distributions are
        continuous functionals on $\Sw(\R)$, then
        $\dotProduct{\widehat{T_{\ep,n}^{-1}(T_{\ep}(f))}}{\psi} \to
        \dotProduct{f}{\widehat{\psi}} = \dotProduct{\widehat{f}}{\psi}$ as $n
        \to \infty$ in the weak-* topology, i.e., in the tempered distribution
        sense. Since the Fourier transform is a homeomorphism on $\Sw'(\R)$ the
        result follows as previously noted.
        
        So we proceed now to prove that for any $\psi \in \Sw(\R)$ then $\psi
        (1- \widehat{\varphi_{\ep}})^n \to 0$ as $n \to \infty$ in the Schwartz
        topology if $\varphi$ is the Gaussian Kernel. Assume without loss of
        generality that $\ep = 1$. Then $\widehat{\varphi}(\xi) = e^{-\xi^2}$
        for all $ \xi \in \R$. Let $h(x)=1-e^{-x^2}$ for all $x \in \R$. We need
        to prove that for all $\alpha,\beta \in \N_0$,
        \begin{equation}\label{eq:ConditionSchwartz}
            \lim_{n\to\infty}\sup_{x\in\R}\left|x^{\alpha}D^{\beta}(\psi 
            h^n)(x)\right| = 0,
        \end{equation}
        where $D$ is the differential operator, i.e., $D^{\beta}f = 
        f^{(\beta)}$.
        
        Fix $\alpha,\beta \in \N_0$. For simplicity of notation, we will denote
        both the function $f$ and its value at $x$ as $f(x)$. First, it is not
        difficult to check by induction that for any $m \in \N$, it holds
        $D^{m}h(x) = P(m,x)e^{-x^2}$ where  $P(m,x)$ is a polynomial of degree
        up to $m$. Applying Faa di Bruno's formula \cite{roman1980formula} for
        $m \leq n$, it holds that
        \begin{equation}\label{eq:FaaDiBrunoExponential}
            D^m(h(x))^n = \sum \frac{m!}{k_1!1!^{k_1}k_2!2^{k_2}\dots 
                k_m!m!^{k_m}}\frac{n!}{(n-\bar{k})!}(h(x))^{n-\bar{k}}\prod_{j=1}^{m}(D^jh(x))^{k_j},
        \end{equation}
        where the sum is taken over all tuples $(k_1,\dots,k_m)$ such that
        $\sum_{j=1}^mjk_j = m$ and $\bar{k} := \sum_{j=1}^mk_j$. We define for
        any $m \in \N$, $\pi(m) := \frac{m!}{k_1!1!^{k_1}k_2!2^{k_2}\dots
            k_m!m!^{k_m}}$. Note that $D^jh(x) = P(j,x)e^{-x^2}$ and therefore,
            by the properties of the exponential,
        $\prod_{j=1}^{m}(D^jh(x))^{k_j} = P(m!,x)e^{-\bar{k}x^2}$. Thus, we can
        write Equation \eqref{eq:FaaDiBrunoExponential} as
        \begin{equation}\label{eq:DerivativeOfNthPower}
            D^m(h(x))^n = \sum 
            \pi(m)\frac{n!}{(n-\bar{k})!}\left(1-e^{-x^2}\right)^{n-\bar{k}}P(m!,k)e^{-\bar{k}x^2},
        \end{equation}
        and note that the only dependence on $n$ occurs in 
        $\frac{n!}{(n-\bar{k})!}(1-e^{-x^2})^{n-\bar{k}}$. Also note that
        $\frac{n!}{(n-\bar{k})!} \leq n^{\bar{k}}$ for all $n,\bar{k} \in 
        \N$ such that $n \geq \bar{k}$, which we are going to assume from 
        now on. Applying the Leibniz formula of differentiation
        to Equation \eqref{eq:ConditionSchwartz} and inserting Equation
        \eqref{eq:DerivativeOfNthPower}, we get
        \begin{align*}
            &\sup_{x\in \R}\left|x^{\alpha}D^{\beta}(\psi 
            h^n)(x)\right| 
             \leq \sum_{m=0}^{\beta}\sum 
            \binom{\beta}{m}\pi(m) \sup_{x\in \R}
            \left|x^{\alpha}D^{\beta-m}\psi(x)
            \frac{n!}{(n-\bar{k})!}(1-e^{-x^2})^{n-\bar{k}}P(m!,x)e^{-\bar{k}x^2}\right|,
        \end{align*}
        since the sums are finite and do not depend on $n$, we just need to 
        show that for any $m \in \{0,\dots,\beta\}$ it holds 
        \begin{align*}
            &\lim_{n\to\infty}\sup_{x\in \R}E_{n,m}(x) = 0,
        \end{align*}
        where
        \begin{align*}
            &E_{n,m}(x):=
            \left|x^{\alpha}D^{\beta-m}\psi(x)
            \frac{n!}{(n-\bar{k})!}(1-e^{-x^2})^{n-\bar{k}}P(m!,x)e^{-\bar{k}x^2}\right|.
        \end{align*}
        It is clear that $E_{n,m}(0) = 0$. We proceed by breaking $\R$ in three
        intervals. The first $|x|\leq \delta$ for any $\delta < 1$, the second
        $\delta < |x| \leq R$ for some $R \in \R$ and the third $|x| > R$. We
        will show that neither $\delta$ nor $R$ depend on $n$. We know that, by
        Taylor's formula, for a sufficiently small $\delta$, it holds that
        $1-e^{-x^2} \leq x^2$ for all $|x| \leq \delta$. Being $x^2$ strictly
        increasing, $x^2 \leq \delta^2$. Moreover, $\psi \in \Sw(\R)$ which
        implies 
        \begin{equation*}
            C_{\alpha,\beta,m} := \sup_{x\in \R}
            \left|x^{\alpha}D^{\beta-m}\psi(x)P(m!,x)e^{-\bar{k}x^2}\right| 
            < \infty.
        \end{equation*}
        Thus
        \begin{equation*}
            \sup_{|x|\leq \delta}E_{n,m}(x) \leq 
            C_{\alpha,\beta,m}\frac{n!}{(n-\bar{k})!}\delta^{2(n-\bar{k})} 
            \leq C_{\alpha,\beta,m}n^{\bar{k}}\delta^{2(n-\bar{k})}.
        \end{equation*}
        But the exponential decrease dominates polynomial increase, i.e., 
        $\lim_{n\to\infty}n^{\bar{k}}\delta^{2n}=0$, and 
        so
        \begin{equation*}
            \lim_{n\to\infty}\sup_{|x|\leq \delta}E_{n,m}(x) = 0.
        \end{equation*}
        Fix now $R \in \R$. Then, for $\delta \leq |x| \leq R$ it holds that
        $0<h(x) = 1-e^{-x^2} < 1$. Take $\gamma = \max_{\delta \leq |x| \leq
        R}h(x) < 1$, which clearly exists because $h$ is even and monotonically
        increasing for $x \geq 0$. Then we arrive at the same conclusions as
        above by the same argument. Thus $\lim_{n\to\infty}\sup_{\delta<|x|\leq
        R}E_{n,m}(x) = 0$. So we just need to prove that $R$ does not depend on
        $n$, because if it does not, being $\psi \in \Sw(\R)$ and $1-h^n \in
        \Sw(\R)$ then $\psi(1-h^n) \in \Sw(\R)$ and so $\psi h^n \in \Sw(\R)$,
        which implies that $\lim_{|x|\to\infty}E_{n,m}(x) = 0$, but the limit
        depends on $n$, so if we make it independent of $n$ we are done by
        splitting the domain as previously noted. For this purpose, note that
        being $\psi \in \Sw(\R)$ it holds that for any $m \in \N_0$,
        \begin{equation*}
            \lim_{|x|\to\infty}\left|x^{\alpha}D^{\beta-m}\psi(x)P(m!,x)\right|
            = 0.
        \end{equation*}
        Consider the function 
        $g_x(n):=\frac{n!}{(n-\bar{k})!}(1-e^{-x^2})^{n-\bar{k}}e^{-\bar{k}x^2}
        \geq 0$ for all $n \in \N$ with $n \geq \bar{k}$ and all $x \in \R$.
        Observe that all terms of $g$ are strictly positive, that 
        $\frac{n!}{(n-\bar{k})!} 
        \leq n^{\bar{k}}$, and also $(1-e^{x^2})^{n-\bar{k}} \leq 
        e^{-(n-\bar{k})e^{-x^2}}$ because $1-x \leq e^{-x}$
        for all $x \in [0,1]$, and $e^{-x^2} \in [0,1]$. Therefore, $g_x(n) 
        \leq v_x(n):=
        n^{\bar{k}}e^{-(n-\bar{k})e^{-x^2}}e^{-\bar{k}x^2}$.
        But by calculus it is not difficult to check that $v_x$ has a global
        maximum at $n = \bar{k}e^{x^2}$, which implies that
        \begin{equation*}
            g_x(n) \leq \bar{k}^{\bar{k}}e^{-\bar{k}}e^{\bar{k}e^{-x^2}}, 
            \quad \forall 
            n \geq \bar{k}.
        \end{equation*}
        Therefore, the limit as $|x| \to \infty$ of $g_x(n)$ exists and is 
        less than $\bar{k}^{\bar{k}}e^{-\bar{k}}$ for all $n \in \N$ with 
        $n \geq \bar{k}$. Thus,
        \begin{equation*}
            0\leq \lim_{|x|\to\infty}E_{n,m}(x) \leq 
            \lim_{|x|\to\infty}\left|x^{\alpha}D^{\beta-m}\psi(x)P(m!,x)\right|
            \bar{k}^{\bar{k}}e^{-\bar{k}}e^{-\bar{k}e^{-x^2}} = 0\cdot 
            \bar{k}^{\bar{k}}e^{-\bar{k}} = 
            0.
        \end{equation*}
        That is, given $\eta > 0$ we can choose $R = R(\eta) > 0$, $\delta 
        > 0$ and $N = N(\eta,\alpha,\beta,m) \in \N$ such that for all $n 
        \geq N$,
        \begin{equation*}
            \sup_{x\in \R}E_{n,m}(x) \leq \sup_{|x|\leq \delta}E_{n,m}(x) + 
            \sup_{\delta < |x| \leq R}E_{n,m}(x) + \sup_{|x| > R}E_{n,m}(x) 
            < \eta.
        \end{equation*}
        This finishes the proof of the Theorem.
    \end{enumerate}
\end{proof}

We want to remark that the last statement of Theorem
\ref{thm:GeneralizationToSchwartzFunctionsAndDistributions} is a new formula for
the inverse of the Weierstrass Transform, i.e., the operator
$\lim_{n\to\infty}T_{\ep,n}^{-1}$ is the inverse for the Weierstrass Transform
with parameter $\ep > 0$ for any $f \in \Sw(\R)$ or $f \in \Sw'(\R)$. Note that
$L^p(\R)$ is injected in $\Sw'(\R)$ for any $p \in [1,\infty]$, as well
the set $\{v \in L^1_{loc}(\R) \mid |v(x)| \leq C(1+|x|^2)^{N/2} \text{ for some
$N \in \N$ and $C > 0$}\}$. Also note that the Delta, Heaviside and Principal
Value distributions are in $\Sw'(\R)$ and so any distribution with compact
support. Therefore, Theorem
\ref{thm:GeneralizationToSchwartzFunctionsAndDistributions} presents a result
that can be used in several situations which is numerically simple to compute.
To end this section, note that if $f \in \Sw(\R)$ of $f \in \Sw'(\R)$ and
$\varphi$ is the Gaussian Kernel, then, in their respective topologies
\begin{align*}
    f = \lim_{n\to\infty}T_{\ep,n}^{-1}(T_{\ep}(f)) &= 
    \lim_{n\to\infty}\sum_{k=0}^{n}(-1)^{k}\binom{n+1}{k+1}T_{\ep}^{k+1}(f) 
    \\
    &= 
    \lim_{n\to\infty}\sum_{k=1}^{n+1}(-1)^{k+1}\binom{n+1}{k}T_{\ep}^{k}(f) 
    \\
    &= 
    \lim_{n\to\infty}\sum_{k=0}^{n+1}(-1)^{k+1}\binom{n+1}{k}T_{\ep}^{k}(f) 
    -f,
\end{align*}
and so $\lim_{n\to\infty}\sum_{k=0}^{n+1}(-1)^{k+1}\binom{n+1}{k}T_{\ep}^{k}(f)
= \lim_{n\to\infty}-(\id - T_{\ep})^{n+1}(f) = 0$ in their respective topologies
for any $f \in \Sw(\R)$ or $f \in \Sw'(\R)$, and so $\lim_{n\to\infty}(\id -
T_{\ep})^{n} \equiv 0$, i.e., the zero operator on $\Sw(\R)$ or $\Sw'(\R)$. Note
the notation $(\id-T_{\ep})^{n}$ is the $n$-th folding composition of $\id -
T_{\ep}$.

In Section \ref{sec:Applications} we show the numerical validation of Theorem
\ref{thm:GeneralizationToSchwartzFunctionsAndDistributions}.

\subsection{Discussion about polynomials in the Euclidean 
    space}\label{subsec:PolynomialsRd}
This small section discusses how most of the results polynomials in the real
line extend to polynomials in the Euclidean space. Note that the Binomial
Theorem, as well as the Pascal Triangle and the proofs by induction extend
nicely if we use multi-index notation. Indeed, let $d \in \N$ and $\alpha,\beta
\in \N_0^{d}$, and $x,y \in \R^d$. If we define
\begin{align*}
    &x^{\alpha} := x_1^{\alpha_1}\dots x_d^{\alpha_d}, \quad &\alpha \pm 
    \beta 
    := 
    (\alpha_1 \pm \beta_1, \dots, \alpha_d \pm \beta_d) \\
    &\alpha \leq \beta \iff \alpha_i \leq \beta_i, \forall i \in 
    \{1,\dots,d\}, 
    \quad &\alpha! := \alpha_1! \cdot \alpha_2! \dots \alpha_d! \\
    &\binom{\alpha}{\beta} := \frac{\alpha!}{\beta!(\alpha-\beta)!} \quad & 
    |\alpha| := |\alpha_1 + \dots + \alpha_d|,
\end{align*}
then it holds that 
\begin{equation*}
    (x\pm y)^{\alpha} = \sum_{\nu \leq 
        \alpha}(\pm 
    1)^{|\alpha-\nu|}\binom{\alpha}{\nu}x^{\nu}y^{\alpha-\nu}.
\end{equation*}
We define $\Poly_{n}(\R^d)$ as the vector space of $n \in \N_0$ order
polynomials in $\R^d$, that is,
\begin{equation*}
    p \in \Poly_{n}(\R^d) \iff p(x)=\sum_{|\alpha|\leq 
        n}a_{\alpha}x^{\alpha}, 
    \quad \forall x \in \R^d.
\end{equation*}
Note that a similar result to Definition \ref{def:PowersOfPolynomialIntegration}
holds in $\R^d$. Indeed, let $\varphi \in \SM(\R^d)$ be even. Then 
\begin{equation*}
    \int_{\R^d}x^{\alpha}\varphi_{\ep}(x)dx 
    = \ep^{d|\alpha|}\int_{\R^d}x^{\alpha}\varphi(x)dx = \begin{cases}
        \ep^{d|\alpha|}c_{\alpha} & |\alpha| \text{ is even or zero} \\
        0 & |\alpha| \text{ is odd}
    \end{cases},
\end{equation*}
where $c_{\alpha} := \int_{\R^d}x^{n}\varphi(x)dx$, and clearly $c_{0} = 1$.

Fix then any even $\varphi \in \SM(\R^d)$ and $\ep > 0$. In either case, observe
it is in fact immediate to generalize certain previous results. Indeed, the
proof of Theorem \ref{thm:MollificationOfPolynomialsRealLine} is extrapolated as
is to $\R^d$, from which we get that 
\begin{equation*}
    (p * \varphi_{\ep})(x) = \sum_{\alpha \leq 
        n}\sum_{\substack{\beta \leq \alpha \\ |\alpha-\beta| \in 
            2\N_0}}\binom{\alpha}{\beta}a_{\alpha}x^{\beta}\ep^{d|\alpha-\beta|}c_{\alpha-\beta},
    \quad (x \in \R^d).
\end{equation*}
That is, a similar result to Equation \eqref{eq:PolyMollificationRealLine}.
Clearly, then
\begin{align*}
    (p * \varphi_{\ep})(x) &= \sum_{\alpha \leq n}a_{\alpha}x^{\alpha} +
    \sum_{\alpha \leq n}\sum_{\substack{\beta < \alpha \\ |\alpha-\beta| \in
    2\N}}\binom{\alpha}{\beta}a_{\alpha}x^{\beta}\ep^{d|\alpha-\beta|}c_{\alpha-\beta}
   =: p(x) + p_1(x), \quad (x \in \R^d).
\end{align*}
which is similar to the results of equation
\eqref{eq:RelationShipPolyMolliPoly} in Corollary
\ref{cor:CorollaryPolynomials}. That is, the convolution of an $n$ degree
polynomial in $\R^d$ with even Schwartz convolution  kernel, results in the sum
of the polynomial itself with a polynomial of degree $n-2$. This therefore
implies that the results of Corollary \ref{cor:PolynomialsMapping} can be
extrapolated, by noting that $|\alpha| = 1 \iff \alpha \in \{ (1,\dots ,0),
(0,1\dots,0), \dots, (0,\dots,1)\}$, and therefore $x \mapsto x^{\alpha} = x_i$
for some $i \in \{1,\dots,d\}$. Thus,
$\int_{\R^d}(x-y)^{\alpha}\varphi_{\ep}(y)dy = x_i -
\int_{\R^d}x_i\varphi_{\ep}(y)dy = s_i$ for some $i \in \{1,\dots,d\}$ and all
$x \in \R^d$. Thus if $p \in \Poly_{1}(\R^d)$ we get
\begin{equation*}
    (p * \varphi_{\ep})(x) = \sum_{|\alpha|\leq 
        1}c_{\alpha}\int_{\R^d}(x-y)^{\alpha}\varphi_{\ep}(y)dy = 
    \sum_{|\alpha|\leq 1}c_{\alpha}x^{\alpha} = p(x), \quad (x \in \R^d),
\end{equation*}
that is, polynomials in $\Poly_{1}(\R^d)$ and $\Poly_{0}(\R^d)$ are invariant to
the convolution with even Schwartz convolution  kernel.
Thus, we can reach the same conclusions as in Theorem 
\ref{thm:ConvolutionIsAnIsomorphismInPolynomials} with an identical proof. 
That 
is, convolution with an
even convolution kernel is an isomorphism of $\Poly_{n}(\R^d)$ onto itself 
for all
$n \in \N_0$, and we can compute the inverse of the convolution similar to 
equations
\eqref{eq:InverseOfConvolution} and \eqref{eq:InverseOfConvolutionOdd}.
Indeed, if $n$ is even the inverse of $p \in \Poly_{n}(\R^d) \mapsto 
T_{\ep}(p) 
= p * \varphi_{\ep} \in C^{\infty}(\R^d)$ with $\varphi \in \SM(\R^d)$ even 
is
\begin{equation*}
    T_{\ep}^{-1}(p) = \sum_{|\alpha|\leq n/2}\sum_{\beta \leq
    \alpha}(-1)^{|\alpha+\beta|}\binom{\alpha}{\beta}T_{\ep}^{|\alpha-\beta|}(p),
    \quad (p \in \Poly_{n}(\R^d)),
\end{equation*}
while if $n$ is odd
\begin{equation*}
    T_{\ep}^{-1}(p) = \sum_{|\alpha|\leq \frac{n-1}{2}}\sum_{\beta \leq 
        \alpha}(-1)^{|\alpha+\beta|}\binom{\alpha}{\beta}T_{\ep}^{|\alpha-\beta|}(p),
    \quad (p \in \Poly_{n}(\R^d)).
\end{equation*}

\section{Applications}\label{sec:Applications}
In this section we present two particular applications of the results
in Section \ref{sec:TheoreticalResults}. First we validate numerically
Theorem \ref{thm:ConvolutionIsAnIsomorphismInPolynomials}, and then
we apply the results of Theorem 
\ref{thm:GeneralizationToSchwartzFunctionsAndDistributions} to an 
application in the reconstruction of a sinusoidal function.
\begin{figure}[pos = t]
    \centering 
    \includegraphics[width=\linewidth]{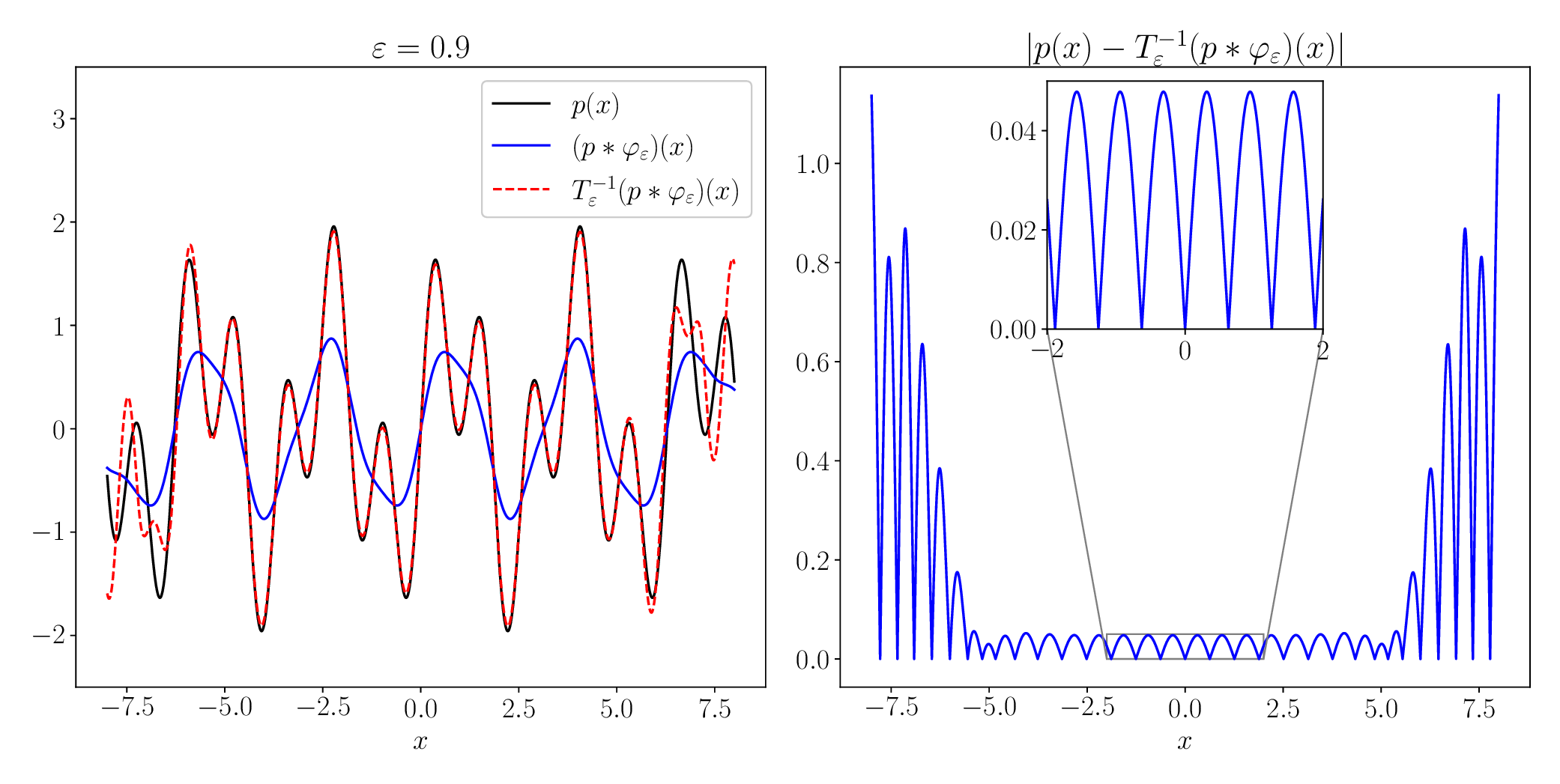}
    \caption{Representation of a numerical validation of Theorem 
        \ref{thm:ConvolutionIsAnIsomorphismInPolynomials}. The left picture
        represents as a solid black line the original polynomial $p$ which is
        the $n=50$ order Taylor polynomial of the function $x \in \R \mapsto
        \sin(5x) + \sin(3x)$. It also represents as a blue solid line its
        convolution $p * \varphi_{\ep}$ with convolution kernel $s\mapsto
        \varphi(x) = \ind_{(-1,1)}\exp(-1/(1-x^2))$ and parameter $\ep = 0.9$.
        The dashed red line represents $T_{\ep}^{-1}(p * \varphi_{\ep})$
        computed using equation \eqref{eq:InverseOfConvolution}. The right
        picture represents the error between the original polynomial and the
        inverse of the convolution.}
    \label{fig:InversionNumericalExperiment}
\end{figure}

\subsection{Validation of Theorem \ref{thm:ConvolutionIsAnIsomorphismInPolynomials}}
In order to numerically validate Theorem
\ref{thm:ConvolutionIsAnIsomorphismInPolynomials}, we present Figure
\ref{fig:InversionNumericalExperiment}. The polynomial is an $n=50$ Taylor
polynomial of the function $x \in \R \mapsto \sin(3x) + \sin(5x)$. Due to the
fact that, numerically speaking, the domain of the original polynomial is always
bounded, it can be seen that at the ends of the domain the inverse of the
convolution and the original polynomial differs the most.  This is a fact on how
convolution is computed numerically. Due to being the functions actually finite
sequences, a zero padding is needed in order to be able to compute the 
convolution numerically. Then, once computed, we take back the same number of 
samples as the original sequence in order to compare it with the original
function in its domain of definition. Nevertheless, it can be seen that, once
the effect of the edges is negligible, that Figure
\ref{fig:InversionNumericalExperiment} validates numerically Theorem
\ref{thm:ConvolutionIsAnIsomorphismInPolynomials}, recovering $p$ from $p *
\varphi_{\ep}$ by applying Equation \eqref{eq:InverseOfConvolution}. Note that
the used convolution kernel has compact support and zeros in its Fourier
transform (cf. Theorem
\ref{thm:GeneralizationToSchwartzFunctionsAndDistributions}).

\subsection{Applications to Signal Processing}\label{subsec:ApplicationSignalProcessing}

We are going to consider the problem of deconvolution of an arbitrary function.
Suppose that the convolution kernel is the Gaussian Kernel, $\varphi \in
\Sw(\R)$. Suppose then, that some function $f : \R \to \R$ has been convolved
with $\varphi_{\ep}$ for some $\ep > 0$. Suppose further that $f$ satisfies any
condition of Theorem
\ref{thm:GeneralizationToSchwartzFunctionsAndDistributions}, Then, by letting $n
\to \infty$ in Equation \eqref{eq:InverseOfConvolution} we could recover $f$ (in
the appropriate convergence).

The traditional approach to deconvolution consists on working on the 
Fourier transformed space. For example, under the appropriate conditions,
for $\varphi$ the Gaussian Kernel and $f \in \Sw(\R)$ it holds that
$\widehat{f * \varphi}(\xi) = \widehat{f}\widehat{\varphi}(\xi) \Longrightarrow
f = \mathcal{F}^{-1}\left(\widehat{f*\varphi}e^{\id^2}\right)$. Note
this is ill-posed numerically speaking, since $x \mapsto e^{x^2}$ growths
exponentially fast, and in particular, this function does not posses
inverse Fourier transform. However, by using the results of Theorem
\ref{thm:GeneralizationToSchwartzFunctionsAndDistributions}, we can reconstruct
$f$ directly from the samples of $f * \varphi$, without the need of passing to
the frequency domain.

Consider then, the function $t \in \R \mapsto f(t) = \sin(5t) + \sin(3t)$ 
(we denote $t$ for time)
as in Figure \ref{fig:InversionNumericalExperiment} and
let $\varphi$ be the Gaussian convolution kernel. Since $f$ defines
a regular tempered distribution, the assumptions of Theorem 
\ref{thm:GeneralizationToSchwartzFunctionsAndDistributions} hold.
\begin{figure}[pos = t]
    \centering
    \includegraphics[width=\linewidth]{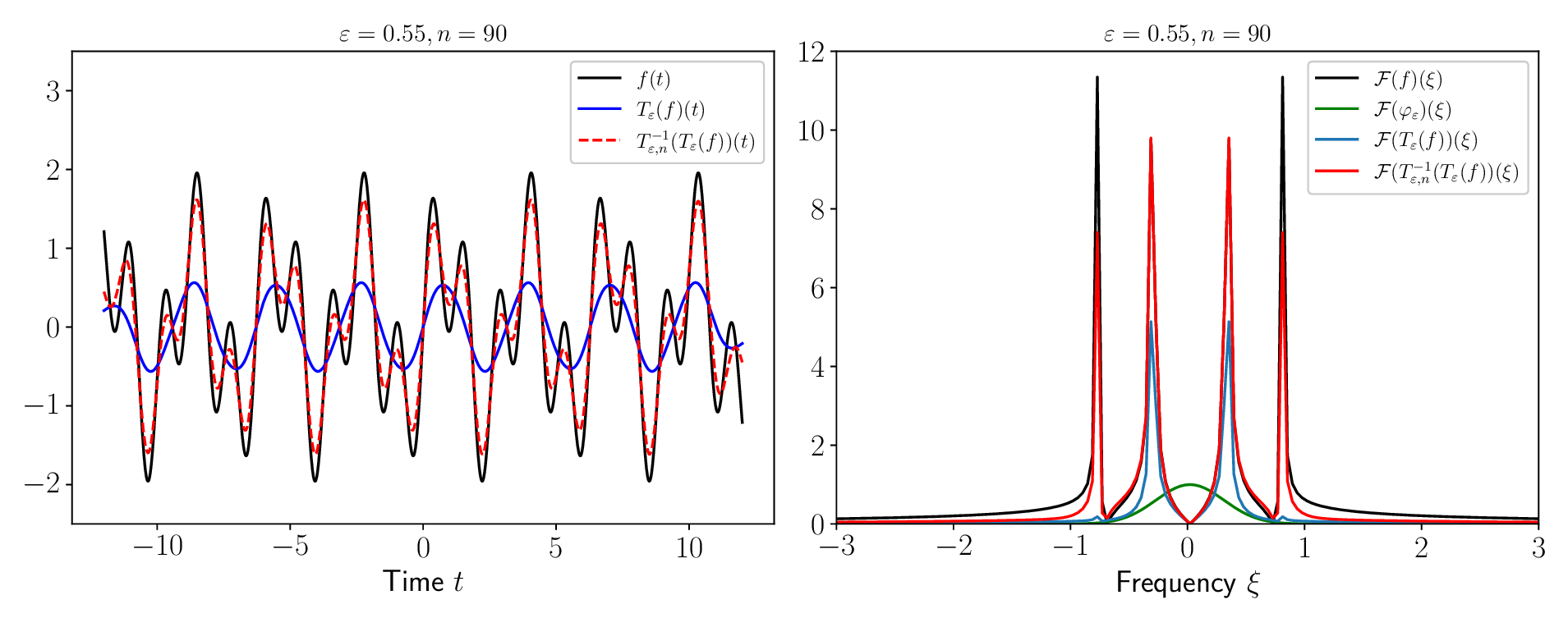}
    \caption{Representation of the numerical validation of Theorem 
        \ref{thm:GeneralizationToSchwartzFunctionsAndDistributions}
        under ideal conditions. Left 
        picture represents in black the original function $t \mapsto f(t) = 
        \sin(5t)+\sin(3t)$, in blue its convolution $T_{\ep}(f)$ with the Gaussian 
        Kernel $\varphi$ with $\ep = 0.55$, and as dashed red line its
        reconstruction $T_{\ep,n}^{-1}(T_{\ep}(f))$ using Equation
        \eqref{eq:InverseOfConvolutionArbitrary} for $n= 90$. Right picture
        represents in black the Fast Fourier Transform (FFT) of $f$, in green
        the FFT of $\varphi$, in blue the FFT of $T_{\ep}(f)$ and in red the FFT of
        the reconstructed function $T_{\ep,n}^{-1}(T_{\ep}(f))$.}
    \label{fig:ApplicationSignalProcessingOne}
\end{figure}
Note $T_{\ep}(f) = f * \varphi_{\ep}$ for $\ep > 0$ is 
the available data. In Figure \ref{fig:ApplicationSignalProcessingOne} we can 
see the results of applying Theorem
\ref{thm:GeneralizationToSchwartzFunctionsAndDistributions} to $T_{\ep}(f)$.
As it can be seen, the signal reconstruction is almost perfect. Theoretically
speaking, we would need to let $n \to \infty$ for (distributional) convergence,
but at $n \geq 100$ the numerical errors start to affect the method, and the 
recovery of the signal for $n=90$ is sufficiently good, numerically speaking,
as can be seen from the time and frequency domains. Indeed, by looking 
at the spectra of $f$, $T_{\ep}(f)$ and $T_{\ep,n}^{-1}(T_{\ep}(f))$, it is 
well validated that the method works, since we were capable of recovering 
(numerically speaking) the original function.

Now we consider the situation depicted in Figure
\ref{fig:ApplicationSignalProcessingTwo}. We suppose that $T_{\ep}(f)$
has been modified with additive noise $\eta : \R \to \R$ with zero mean, and
variance $0.5$. Note that the signal to noise ratio is less than one,
as it can be seen from the top and bottom left pictures. However,
by applying Equation \eqref{eq:InverseOfConvolutionArbitrary} to
$T_{\ep}(f)+\eta$, it can be seen from the top right picture that we still
recover (with amplified noise) the frequency contents that were once removed by
the convolution kernel. This can be ``heuristically'' be justified, by noting 
that if $T_{\ep,n}^{-1}$ is the inverse (in the limit) of the convolution 
of ``sufficiently nice'' functions, then this operator must amplify noise,
being a deconvolution. Thus, after computing Equation
\eqref{eq:InverseOfConvolutionArbitrary} it is clear that the original function
was (at least) composed of higher frequencies than the spectrum of
$T_{\ep}(f)+\eta$ shows. Thus, we can apply a low pass filter $h:\R\to\R$ to 
the recovered noisy signal in order to remove the undesired noise. This is
shown in the bottom pictures. As it can be seen, we have effectively recovered
a close version to the original signal, both in the time and frequency domain,
even if noise was present, with the particularity that the signal to noise
ratio was lower than one.

\begin{figure}[pos = t]
    \centering
    \includegraphics[width=\linewidth]{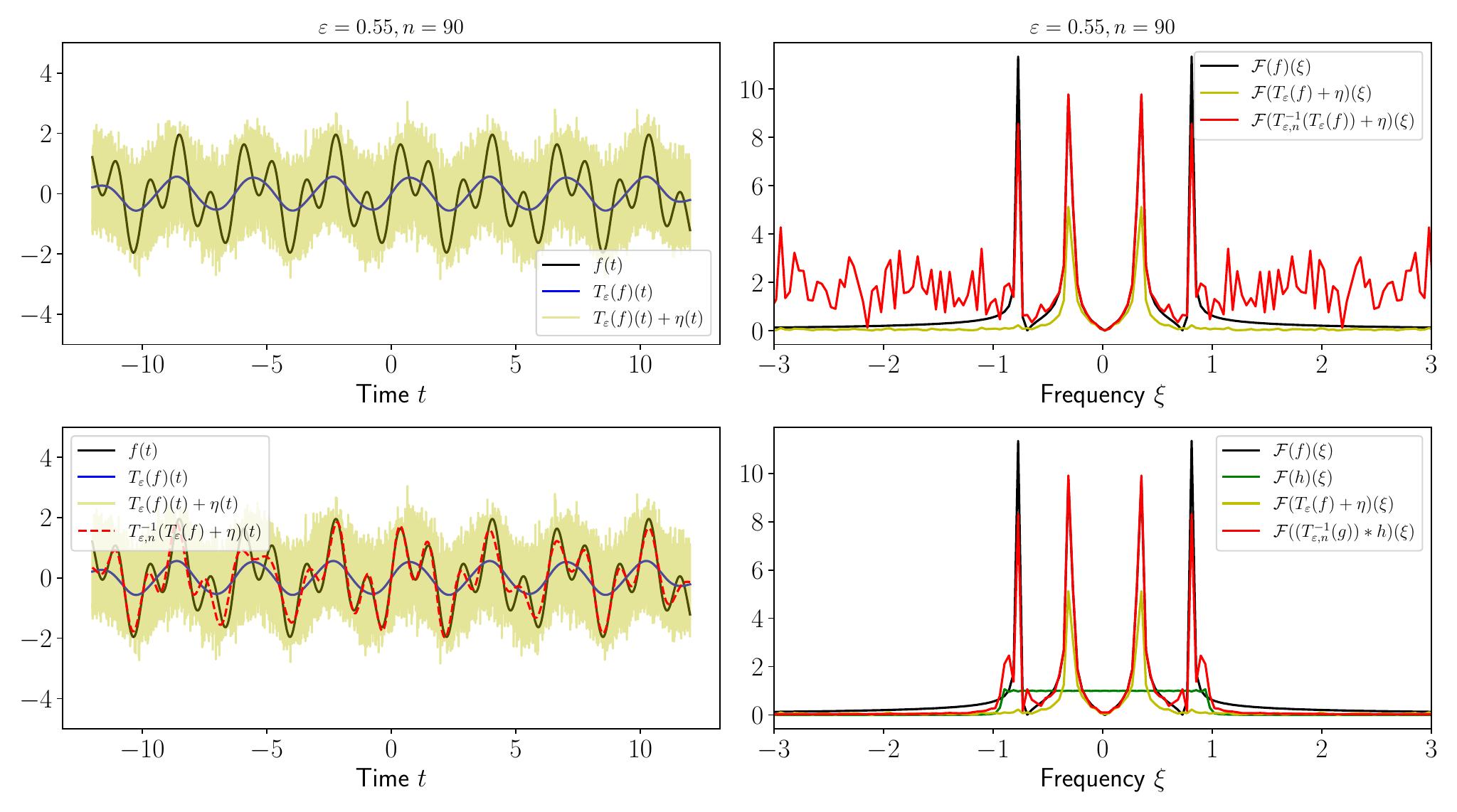}
    \caption{Representation of the numerical validation of Theorem
        \ref{thm:GeneralizationToSchwartzFunctionsAndDistributions} under 
        non ideal conditions. Top and bottom left pictures
        represent in black the original function 
        $t \mapsto f(t) = \sin(5t)+\sin(3t)$, with a blue line $T_{\ep}(f)$ with
        $\ep = 0.55$ and $\varphi$ the Gaussian Kernel,
        as a yellow line $T_{\ep}(f) + \eta$, where $\eta$ is noise generated
        using a normal distribution with zero mean and variance 0,
        and with a red line $T_{\ep,n}^{-1}(T_{\ep}(f)+\eta)$ where $n =90$.
        Top right picture depicts as a black line the FFT of $f$,
        as a yellow line the FFT of $T_{\ep}(f)+\eta$ and as a red line 
        $T_{\ep,n}^{-1}(T_{\ep}(f)+\eta)$, i.e., the application of 
        Equation \eqref{eq:InverseOfConvolutionArbitrary} to $T_{\ep}(f)+\eta$.
        The bottom right picture shows with a black line the FFT of $f$,
        with a green line the FFT of the used filter $t \mapsto h(t) =
        2\sinc(2t)$,  with a yellow line the FFT of $T_{\ep}(f)+\eta$ and with
        a red line the FFT of $T_{\ep,n}^{-1}(T_{\ep}(f)+\eta) * h$, that is, $g
        := T_{\ep}(f) + \eta$.}
    \label{fig:ApplicationSignalProcessingTwo}
\end{figure}

\section{Conclusions}\label{sec:Conclusions}
This work established a rigorous and computationally robust framework for the
exact inversion of convolution operators with even Schwartz kernels. We first
proved that these operators act as automorphisms on the vector space of
finite-order polynomials, yielding an explicit, finite-sum inversion formula
based solely on recursive forward convolutions.
By extending this algebraic foundation to infinite-dimensional spaces, we
derived a novel iterative formula for recovering functions and tempered
distributions in their respective topologies. This approach provides a new
alternative to traditional differential inversion of the Weierstrass transform,
bypassing the numerical ill-posedness inherent in such operators. 
Numerical validation confirmed the feasibility of the method, demonstrating
successful signal recovery for sinusoidal functions in both noiseless and low
signal-to-noise ratio --- less than unity --- environments. These results
confirm that the recursive convolution framework is not only theoretically exact
but also highly resilient for practical signal processing and computational
applications

\section{Acknowledgments}
This work has been supported by the FPU program of the Ministry of science,
innovation and universities of Spain. The author thanks Juan F. Jim\'enez for
his helpful feedback that contributed to the refinement of this paper.

\printcredits

\bibliographystyle{cas-model2-names}

\bibliography{cas-refs}



\end{document}